\newtheorem{proposition}{Proposition}
\newenvironment{proof}{{\indent \indent \it Proof:\quad}}{\hfill $\blacksquare$\par}
\newtheorem{remark}{Remark}
\begin{document}

\title{Low-Altitude UAV Tracking via Sensing-Assisted Predictive Beamforming}

\author{\IEEEauthorblockN{Yifan~Jiang, Qingqing~Wu, Hongxun~Hui, Wen~Chen, and Derrick~Wing~Kwan~Ng}

\thanks{This work is supported in part by NSFC 62371289, in part by Shanghai Jiao Tong University 2030 Initiative, and in part by the Science and Technology Development Fund, Macao SAR (File no. 0050/2025/AIJ, File no. 001/2024/SKL, and 0002/2025/EQP).}
\thanks{Yifan~Jiang is with Shanghai Jiao Tong University, Shanghai 200240, China, and also with the State Key Laboratory of Internet of Things for Smart City, University of Macau, Macao 999078, China (email: yc27495@umac.mo). 
Qingqing~Wu and Wen~Chen are with the Department of Electronic Engineering, Shanghai Jiao Tong University, Shanghai 200240, China (e-mail: \{qingqingwu@sjtu.edu.cn; wenchen@sjtu.edu.cn\}).
Hongxun~Hui is with the State Key Laboratory of
Internet of Things for Smart City and Department of Electrical and Computer
Engineering, University of Macau, Macao, 999078 China (email: hongxunhui@um.edu.mo).
Derrick~Wing~Kwan~Ng is with the School of Electrical Engineering and Telecommunications, University of New South Wales, Sydney, NSW 2052, Australia (e-mail: w.k.ng@unsw.edu.au).
}}

\maketitle

\begin{abstract} 
Sensing-assisted predictive beamforming shows significant promise for enhancing various future unmanned aerial vehicle (UAV) applications in integrated sensing and communication (ISAC) systems. 
However, the impact of such beamforming technique on the communication reliability was largely unexplored and challenging to characterize.
To fill this research gap and tackle this issue, this paper proposes a cellular-connected UAV tracking scheme leveraging extended Kalman filtering (EKF), where the predicted UAV trajectory, sensing duration ratio, and target constant received signal-to-noise ratio (SNR) are jointly optimized to maximize the outage capacity at each time slot. 
To address the implicit nature of the objective function, analytical outage probability (OP) approximations are proposed based on second-order Taylor expansions, providing an efficient and full characterization of outage capacity. 
Subsequently, an efficient algorithm is proposed based on a combination of bisection search and successive convex approximation (SCA) to address the non-convex optimization problem with guaranteed convergence. 
To further reduce computational complexity, a second efficient algorithm is developed based on alternating optimization (AO). 
Simulation results validate the accuracy of the derived OP approximations, the effectiveness of the proposed algorithms, and the significant outage capacity enhancement over various benchmarks. 
Furthermore, we show that the optimized predicted UAV trajectory tends to be parallel to the base station's uniform linear array antennas with a nonzero minimum distance, indicating a trade-off between decreasing path loss and enjoying wide beam coverage for outage capacity maximization.
\end{abstract}

\begin{IEEEkeywords}
	Integrated sensing and communication (ISAC), unmanned aerial vehicle (UAV), tracking, low-altitude, outage, sensing-assisted predictive beamforming. 
\end{IEEEkeywords}

\section{Introduction}

The production of unmanned aerial vehicles (UAVs), also known as drones, is anticipated to experience sustained rapid growth over the next decade, reaching a global market value of over 70 billion dollars by 2030 \cite{UAV1}. 
To date, the emerging low-altitude economy (LAE) has attracted significant worldwide attention, which unprecedently utilizes vertical space below an altitude of 1000 meters for numerious applications such as logistics, industrial monitoring, emergency rescue, and air taxis \cite{LAE1, LAE3, LAE4}.
However, it can be envisioned that signal interference and network congestion will intensify considerably due to the explosive increase in UAV equipment.
Under these challenging circumstances, it is crucial to guarantee the communication and tracking performance of UAV users and targets, since these metrics serve the foundations for the aforementioned applications. 
In addition, despite the independent progress made in UAV communication and tracking, such as short packet communication and global navigation satellite system \cite{qqw2021JSAC}, incompatibility among standalone systems designed for different functions causes inefficient use of hardware and spectrum resources, which degrades overall system performance.

In recent years, integrated sensing and communication (ISAC) has been proposed and widely investigated as an enabling technology for the upcoming sixth-generation (6G) network \cite{ISAC2,fwd,ZhenDu-2,Guangji}. 
By effectively utilizing the inherent reciprocity between sensing and communication, ISAC is expected to offer more precise and reliable wireless coverage for UAVs, thereby mitigating signal interference among multiple UAVs and enabling the access of a massive number of UAVs. 
Meanwhile, hardware and spectrum resources can be efficiently integrated, significantly reducing system overhead and improving overall spectral efficiency. 
Therefore, ISAC technologies present a vital solution to address the aforementioned issues, while simultaneously providing high-quality communication and tracking services for UAVs.

Among various existing signal processing and architecture designs for ISAC, sensing-assisted predictive beamforming is appealing due to its effectiveness in enhancing both target tracking accuracy and communication links for users concurrently \cite{ZhenDu-2,PB1,Relia1}. 
Specifically, sensing-assisted predictive beamforming refers to the design of beamforming vectors based on both predicted and measured user information, typically user directions or positions. 
By utilizing the predicted user information, the system overhead associated with conventional beam training and feedback can be considerably reduced and the beam alignment accuracy can be improved from the perspective of Bayesian filtering \cite{Relia1}. 
Existing studies have demonstrated that sensing-assisted predictive beamforming can significantly enhance the performance of vehicular networks \cite{ZhenDu-2,PB1,PB6,XiaoMeng,PB2,PB3,Relia1,Relia2}.
For instance, a predictive beamforming framework was proposed in \cite{PB1} for vehicular communications, leveraging the popular extended Kalman filtering (EKF) technique and substantially enhancing the achievable rate over conventional feedback-based communication schemes.
In \cite{PB6}, the average achievable sum-rate was maximized by optimizing the predictive beamformer for multi-user vehicular communications, exhibiting an upper-bound-approaching communication performance.
Other works have extended sensing-assisted predictive beamforming designs to a wide range of cases with complex driving behaviour or trajectories \cite{XiaoMeng,PB2}, end-to-end transmission \cite{PB3}, extended targets \cite{Relia2}, etc.

Given the potential of sensing-assisted predictive beamforming, more recent works have explored its applications for system performance enhancement in low-altitude UAV networks \cite{ISAC-UAV-1,PB4,secure1}.
In \cite{ISAC-UAV-1}, a predictive beam management approach utilizing visual information was proposed based on the EKF for multi-UAV communication and tracking, showing a $20.34\%$ data rate gain under a 64-antenna setup. 
In \cite{PB4} and \cite{secure1}, UAV eavesdropper tracking schemes were proposed based on sensing-assisted predictive beamforming to improve the secrecy rates of legitimate users and the adversary UAV tracking accuracy.
The above works are limited to scenarios with unknown or uncontrollable UAV trajectories. 
Nevertheless, unlike vehicular networks with constrained vehicle trajectories, there are many cases in low-altitude UAV networks where UAV trajectories can be optimized and controlled in real time within a large volume of three-dimensional (3D) space, such as cellular-connected drone surveillance, displays, and tourism \cite{qqw2021JSAC,LAE1,LAE3}.
In these cases, the achievable rate and Cramér-Rao bound (CRB) for UAV movement estimation are not only affected by predictive beamforming designs but also highly dependent on UAV trajectories \cite{Khalili-1,Khalili-2,MARichards}. 
Therefore, the joint optimization of UAV trajectory and predictive beamforming is promising for improving the communication and tracking performance of the aforementioned low-altitude UAV applications. 
However, such joint optimization is challenging due to the complicated coupling between the UAV trajectory and BS beamformer.
Moreover, the UAV trajectory can only be partially optimized because of the inherently random aerial environmental variations and control errors in practice \cite{qqw2021JSAC,2020TCOM-Uncertain,ISAC-UAV-2}, thus requiring dedicated optimization approaches to address UAV trajectory uncertainties.

Furthermore, the majority of existing works on sensing-assisted predictive beamforming have focused on spectral efficiency improvement, which is insufficient for characterizing communication reliability.
However, modern low-altitude UAV applications have stringent requirements for communication reliability, such as an extremely low probability of signal outage \cite{qqw2021JSAC,LAE1}.
So far, relatively few investigations have examinated the improvements in beam alignment probability brought about by sensing-assisted predictive beamforming \cite{Relia1,Relia2,XiaoMeng}.
In \cite{Relia1}, the impact of beamwidth on the beam alignment probability was studied for predictive beamforming-enabled vehicle tracking.
Following \cite{Relia1}, dynamic beamwidth designs for vehicle tracking were proposed to improve the tracking or communication performance incorporating the beam alignment probability \cite{Relia2,XiaoMeng}.
Nevertheless, the beam alignment probability cannot characterize the reliable data rate or link capacity, which are crucial performance metrics for practical system designs.
Instead, the outage probability (OP) and outage capacity are more appropriate and fundamental communication performance metrics to be studied \cite{goldsmith}.
Far from solely considering an uninvestigated performance metric, important insights can be drawn by studying the OP and outage capacity into the roles of sensing accuracy on communication reliability and the mechanisms of maximizing the reliable communication performance via predictive beamforming and UAV trajectory optimization.
Meanwhile, it is intractable to derive the OP and outage capacity directly from the studied beam alignment probability owing to the different definitions, making it challenging to characterize the OP and outage capacity in the sensing-assisted predictive beamforming scheme.

Motivated by the aforementioned issues, we investigate the outage capacity characterization and maximization via UAV trajectory optimization in this paper. 
Specifically, a cellular-connected UAV is served and also concurrently tracked via EKF by a monostatic ISAC BS. 
Through remote control from the BS, the predicted UAV trajectory can be proactively controlled, although it is interfered by control noise modeled as a Gaussian random process. 
Within each short time slot, the UAV motion state is assumed to be deterministic yet unknown in advance. 
Meanwhile, the communication performance directly depends on a sensing duration ratio between the prediction and measurement durations at each time slot under the sensing-assisted predictive beamforming scheme. 
As a result, the system communication reliability at each time slot can be evaluated by OPs and outage capacities at the prediction and measurement stages, respectively. 
The main contributions of this paper are summarized as follows: 
\begin{itemize}
    \item A joint UAV tracking and outage capacity maximization scheme is proposed for reliable communication, where an optimization problem for outage capacity maximization is formulated and addressed at each time slot to optimize the predicted UAV trajectory, sensing duration ratio, and target constant received signal-to-noise ratios (SNRs), subject to constraints on UAV velocity and a maximum tolerable OP. 
    \item To address the implicit and non-convex objective function and constraints in the formulated problem, closed-form approximations of OPs for both the prediction and measurement stage are proposed based on second-order Taylor expansions, enabling the full characterization of outage capacity and a more tractable optimization problem formulation. 
    To the best of our knowledge, this paper represents the first effort to characterize the outage capacity under the sensing-assisted predictive beamforming scheme in ISAC systems. 
    \item An efficient algorithm is proposed to handle the formulated optimization problem with guaranteed convergence, in which the formulated problem is decomposed into two feasibility problems addressed by the bisection search and SCA, respectively. Moreover, the updating rules between the two feasibility problems are heuristically designed based on the proved monotonicity of apporximated OPs with respect to (w.r.t.) the target constant received SNRs. 
    To further reduce computational complexity and avoid unnecessary trials involving infeasible solutions, a second efficient algorithm is proposed capitalizing alternating optimization (AO), which maximizes the outage capacity within a few iterations. 
    \item Simulation results validate the effectiveness of our proposed OP approximations, algorithms, and outage capacity maximization scheme. 
    In addition, in the prediction mean square error (MSE)-dominant case, our proposed scheme achieves a significant outage capacity improvement compared to benchmarks. 
    Moreover, our results reveal that the optimized predicted UAV trajectory ends up with being parallel to the BS uniform linear array (ULA) antennas with a nonzero minimum distance, which also demonstrates a trade-off between reducing path loss and enlarging beam coverage area. 
\end{itemize}

\emph{Notation:} $\mathbf{0}_{m}$ and $\mathbf{1}_{m}$ denote a $m\times 1$ column vector with all elements equal to 0 and 1, respectively. 
$\mathcal{O}(\cdot)$ represents the big-O notation for computational complexity. 
$\mathbb{E}_{x}[\cdot]$ is statistical expectation w.r.t. the distribution of $x$. 
$\mathcal{N}(\mathbf{x},\mathbf{Q})$ denotes a real-valued Gaussian distribution with a mean vector $\mathbf{x}$ and covariance matrix $\mathbf{Q}$ and $\sim$ means ``distributed as''. 
$\preceq$ is the element-wise component inequality. 
$\otimes$ is the Kronecker product. 
$\mathrm{diag}(b_{1},...,b_{L})$ denotes a diagonal matrix with $b_{1},...,b_{L}$ being its diagonal elements. 
For an arbitrary matrix $\mathbf{A}$, $\mathbf{A}^{T}$, $\mathbf{A}^{-1}$, $\mathrm{det}(\mathbf{A})$, and $[\mathbf{A}]_{ij}$ denote its transpose, inverse, determinant, and $(i,j)$-th element, respectively. 
$\nabla f(\cdot)$ represents the gradient of the function $f(\cdot)$.
Other key notations are summarized in Table I.

\section{System Model}\label{SecII}

We consider a terrestrial BS that employs downlink ISAC signals to simultaneously track and communicate with a single-antenna cellular-connected UAV.\footnote{The considered scenario can be readily extended to multi-UAV scenarios, where the BS serves multiple UAVs with time-division or frequency-division multiple access schemes.}
As an initial study, it is assumed that the UAV flies at a fixed altitude of $H$ m, and the BS is equipped with ULAs comprising $N_{\text{t}}$ transmit antennas and $N_{\text{r}}$ receive antennas.\footnote{This model can be readily extended to the case with 3D trajectory optimization by incorporating the altitude into the state vector, and the case with planar arrays by additonally modeling the elevation angle measurement and formulating the beamforming gain as a product of antenna gain from two orthogonal directions.}
Furthermore, the uncertainty of the UAV motion state (i.e., the UAV position and velocity) is considered owing to practical issues such as control errors \cite{2020TCOM-Uncertain,ISAC-UAV-2}. 
Moreover, with a sufficiently short time interval $\Delta T$ s, the UAV motion state can be assumed to be invariant \cite{qqwMultiUAV}. 
Therefore, without loss of generality, a three-dimensional (3D) Cartesian coordinate system is considered, where the BS is located at the origin and the UAV motion state vector at the $n$-th time slot can be denoted by $\mathbf{x}_{n} = [x_{n}, v_{n}^{\text{x}}, y_{n}, v_{n}^{\text{y}}]^{T}$ with $x_{n}$, $v_{n}^{\text{x}}$, $y_{n}$, and $v_{n}^{\text{y}}$ denoting the $x$-axis coordinate, the velocity along $x$-axis, the $y$-axis coordinate, and the velocity along $y$-axis, respectively. 
Despite the inherent uncertainty in UAV motion state, it is still possible to partially plan the UAV trajectory by designing the predicted state vector at the $(n+1)$-th time slot, which can be realized by remote control from the BS \cite{qqw2021JSAC,ISAC-UAV-1,ISAC-UAV-2}. 
The other parts of our considered system are specified in the following subsections. 

\begin{figure}[!t]
    \centering
    \includegraphics[width=0.38\textwidth]{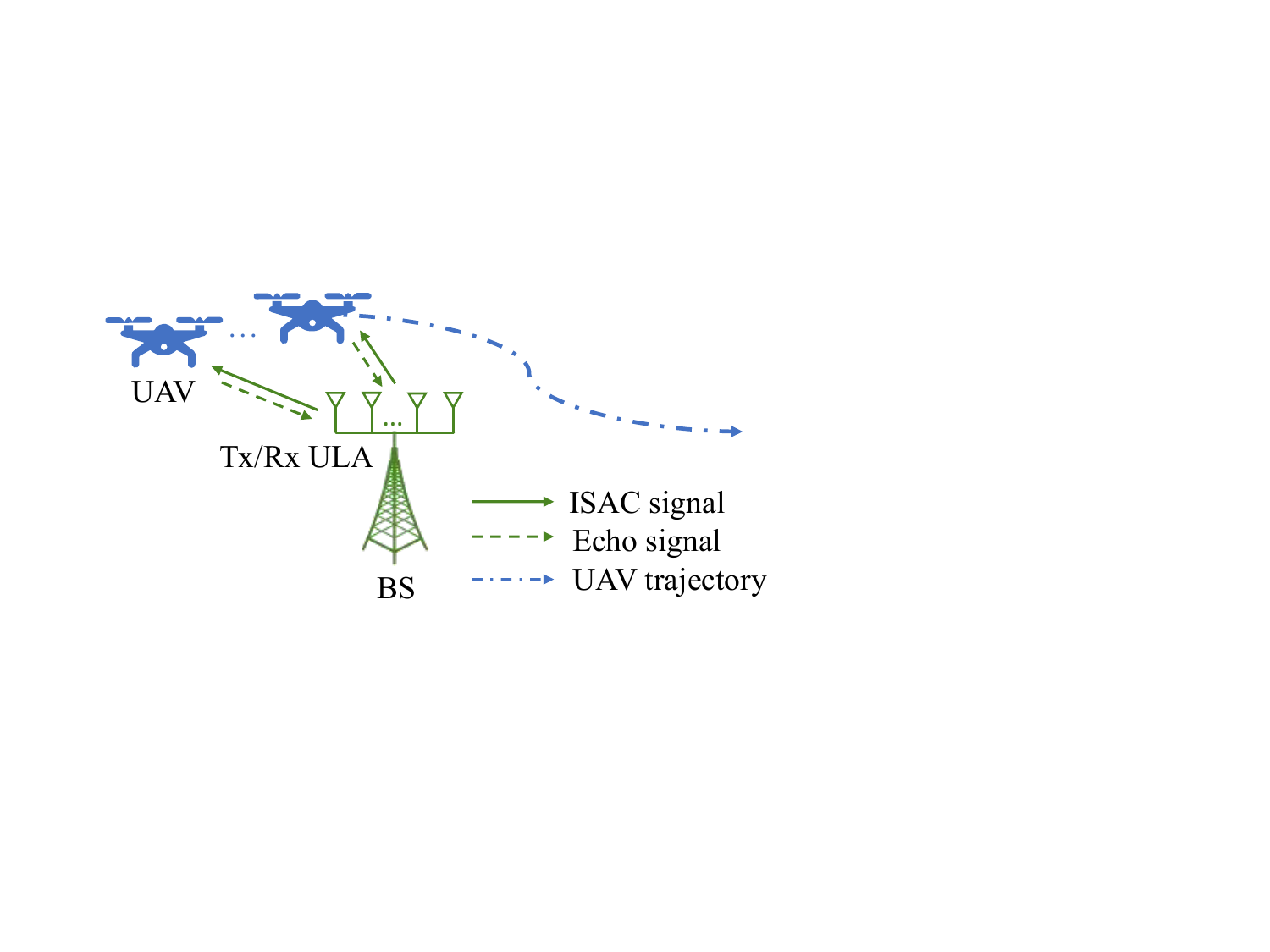}
    \caption{System model illustration.}
    \label{fig0}
    \vspace{-5mm}
\end{figure}

\subsection{UAV Mobility Model}

The entire UAV flight dynamic can be described exploiting a discrete-time state evolution model expressed as \cite{CVModel}
\vspace{-1.5mm}
\begin{equation}
    \mathbf{x}_{n} = \mathbf{G}\mathbf{x}_{n-1} + \mathbf{u}_{n} + \mathbf{z}_{\text{p},n}, \forall n \in\{1,2,...,N\}, \label{fm:SE}
    \vspace{-1.5mm}
\end{equation}
where $\mathbf{G}\in\mathbb{R}^{4\times 4}$ denotes the transition matrix, $\mathbf{u}_{n}\in\mathbb{R}^{4\times 1}$ denotes the motion control input from the BS, $N$ denotes the total number of time slots, and $\mathbf{z}_{\text{p},n} \sim \mathcal{N}(\mathbf{0},\mathbf{Q}_{\text{p}})$ denotes the process noise owing to control errors \cite{CVModel}, respectively.
The expressions of $\mathbf{G}$ and $\mathbf{Q}_{\text{p}}\in\mathbb{R}^{4\times 4}$ can be given by 
\vspace{-1.5mm}
\begin{equation}
	\mathbf{G} = \mathbf{I}_{2} \otimes \begin{bmatrix}
        1 & \Delta T  \\
        0 & 1 
    \end{bmatrix}, 
    \mathbf{Q}_{\text{p}} = \mathbf{I}_{2} \otimes \begin{bmatrix}
        \frac{1}{3}\Delta T^{3} & \frac{1}{2}\Delta T^{2}  \\
        \frac{1}{2}\Delta T^{2} & \Delta T 
    \end{bmatrix}\tilde{q},
    \vspace{-1.5mm}
\end{equation}
, respectively, where $\tilde{q}$ denotes the process noise intensity.\footnote{$\mathbf{G}$ and $\mathbf{Q}_{\text{p}}$ can be derived by sampling a continuous-time random process modelling the UAV movement as in \cite{CVModel}. Besides, the UAV movement with anisotropic accelerations can be modeled by formulating $\mathbf{Q}_{\text{p}}$ with different process noise intensities along different movement directions.} 

Note that $\{\mathbf{x}_{n}\}, \forall n,$ is indeed a random process and cannot be directly acquired by either the UAV or the BS. 
Fortunately, $\mathbf{x}_{n}$ can be predicted at the $(n-1)$-th time slot and subsequently estimated at the $n$-th time slot, which are denoted by $\breve{\mathbf{x}}_{n} = [\breve{x}_{n}, \breve{v}_{n}^{\text{x}}, \breve{y}_{n}, \breve{v}_{n}^{\text{y}}]^{T}$ and $\hat{\mathbf{x}}_{n} = [\hat{x}_{n}, \hat{v}_{n}^{\text{x}}, \hat{y}_{n}, \hat{v}_{n}^{\text{y}}]^{T}$, respectively.  
More specifically, by designing the motion control input as $\mathbf{u}_{n} = \breve{\mathbf{x}}_{n} - \mathbf{G}\hat{\mathbf{x}}_{n-1}$, the relationship between $\mathbf{x}_{n}$ and $\breve{\mathbf{x}}_{n}$ can be compactly expressed as 
\vspace{-1.5mm}
\begin{equation}
    \mathbf{x}_{n} = \breve{\mathbf{x}}_{n} + \mathbf{G}(\mathbf{x}_{n-1} - \hat{\mathbf{x}}_{n-1}) + \mathbf{z}_{\text{p},n}. \label{fm:pre}
    \vspace{-1.5mm}
\end{equation}
Therefore, the UAV trajectory optimization can be performed by appropriately optimizing $\breve{\mathbf{x}}_{n}$. 
The detailed procedures for obtaining $\hat{\mathbf{x}}_{n}$ are dependent on $\breve{\mathbf{x}}_{n}$ and are specified in the following subsection. 
Furthermore, a crucial assumption about the prediction and estimation errors is specified as follows:  

\emph{Assumption 1 (Small prediction/estimation error): In this paper, we assume that the prediction and estimation errors, although inherently exist and follow different probability distributions, are sufficiently small. 
Thus, the ground-truth value of the state vector can be approximated by the predicted and estimated values, i.e., $\mathbf{x}_{n} \approx \breve{\mathbf{x}}_{n} \approx \hat{\mathbf{x}}_{n}$ \cite{JunkunYan2015TSP,PB1,YFJ2024CL}.}

\begin{remark}
    Although assumption 1 may appear idealistic, small prediction and estimation errors are practically achievable in scenarios such as millimeter-wave ISAC systems. 
    Specifically, highly accurate localization/tracking can be achieved thanks to favorable channel conditions and large antenna array gain \cite{PB2,PB3,Relia1,Relia2}.
    Moreover, this study primarily focuses on characterizing the impact of UAV trajectories on communication reliability under the sensing-assisted predictive beamforming scheme. 
    Therefore, assumption 1 is well-justified and does not diminish the necessity and value of the proposed investigation. 
\end{remark}

{
\begin{table}[!t]
    \centering
{\caption{List of Key Notations.}} \label{tab:notation}
{
\footnotesize
\vspace{-1.7mm}
\begin{tabular}{|m{1.2cm}|m{6.6cm}|}
\hline
\textbf{Symbols} & \textbf{Description}  \\ \hline
$\mathbf{x}_{n}, \breve{\mathbf{x}}_{n}, \hat{\mathbf{x}}_{n}$ & The actual, predicted and estimated UAV motion state vector at the $n$th time slot  \\ \hline
$\mathbf{M}_{\text{p},n}, \mathbf{M}_{n}$ & The prediction and estimation MSE matrix   \\ \hline
$\xi_{\text{p},n}, \xi_{\text{e},n}$ & The random variable for the complemetary outage event at the prediction and estimation stage  \\ \hline
$\tilde{\xi}_{\text{p},n}, \tilde{\xi}_{\text{e},n}$ & The approximated random variable for the complemetary outage event   \\ \hline
$\breve{\gamma}_{n}, \hat{\gamma}_{n}$ & The target constant received SNR at the prediction and estimation stage  \\ \hline
$\bm{\gamma}_{n}$ & The target constant received SNR vector  \\ \hline
$\zeta_{\text{p},n}, \zeta_{\text{e},n}$ & The OP at the prediction and estimation stage  \\ \hline
$\tilde{\zeta}_{\text{p},n}, \tilde{\zeta}_{\text{e},n}$ & The approximated OP at the prediction and estimation stage  \\ \hline
$C_{\text{p},n}, C_{\text{e},n}$ & The outage capacity at the prediction and estimation stage  \\ \hline
$C_{n}$ & The overall outage capacity  \\ \hline
$w_{n}$ & The sensing duration ratio  \\ \hline
$\mathbf{q}_{n}, \breve{\mathbf{q}}_{n}, \hat{\mathbf{q}}_{n}$ & The ground-truth, predicted and estimated UAV trajectory  \\ \hline
$\kappa(\cdot), \varkappa(\cdot)$ & The function denoting the beam alignment accuracy and the maximum OP  \\ \hline
\end{tabular}}%
\vspace{-5mm}
\end{table}
}

\vspace{0.1mm}
\subsection{Sensing-Assisted Beamforming}

In our considered ISAC system, a two-stage predictive beamforming scheme is implemented by the BS to achieve real-time UAV tracking and communication \cite{PB1}. 
At each stage, the BS adaptively designs its beamforming vector according to the predicted or estimated UAV motion state, which is detailed as follows:

\subsubsection{Prediction Stage}
At the beginning $w_{n}$ ratio of the $n$th time slot, the BS generates the predicted state vector $\breve{\mathbf{x}}_{n}$ and transmits ISAC signals with the beamforming vector expressed as $\breve{\mathbf{f}}_{n} = \mathbf{a}(\breve{\theta}_{n}) = \mathbf{a}(\arctan(\breve{y}_{n}/\breve{x}_{n}))$, where $w_{n}$ and $\breve{\theta}_{n}$ denote the sensing duration ratio and the predicted azimuth angle, respectively. 
Based on assumption 1, predictive beamforming can achieve sufficient accuracy such that the UAV is reliably illuminated by the main lobe of the beam, enabling the BS to successfully receive echo signals from the UAV. 
Meanwhile, the BS measures the azimuth angle $\theta_{n}$ and distance $d_{n}$ of the UAV from echo signals via the matched-fitering technique \cite{MARichards}. 
The measurement model is explicitly given by 
\vspace{-1.5mm}
\begin{equation}
    \mathbf{w}_{n} 
    = \mathbf{h}(\mathbf{x}_{n}) + \mathbf{z}_{\text{m},n}   
    = \begin{bmatrix}
        \arctan(y_{n}/x_{n}) \\
        \sqrt{ x_{n}^{2} + y_{n}^{2} + H^{2} }
    \end{bmatrix}
    + \begin{bmatrix}
        z_{1,n} \\
        z_{2,n}
    \end{bmatrix}, \label{formu::SM-model}
    \vspace{-1.5mm}
\end{equation}
where $\mathbf{w}_{n} = [\hat{\theta}_{n}, \hat{d}_{n}]^{T}$ represents the measured results, $\hat{\theta}_{n}$ denotes the measured azimuth angle, $\hat{d}_{n}$ denotes the measured distance, $\mathbf{z}_{\text{m},n}$ represents the measurement noise vector with $z_{i,n} \sim \mathcal{N}(0,\sigma_{i,n}^{2}), i=1,2$, and $\sigma_{i,n}^{2}, i=1,2$ denote the corresponding measurement noise variance, respectively. 
Given the sparse blockages and scatterings in the vertical dimension, the communication channel between the BS and UAV can be assumed to be line-of-sight (LoS)-dominant with free-space path loss \cite{qqwUAV,KTMeng-2023-TWC-UAV-IPSAC,qqwMultiUAV}.\footnote{The self-interference at the ISAC BS can be suppressed by radio-frequency isolation designs and highly directional beamforming techniques \cite{Relia1}. Also, we focus on the LoS component of the echo signals because the clutter or multipath components can be mitigated by the space-time adaptive processing and LoS identification techniques \cite{MARichards,LoS-identi}.} 
Consequently, the expressions of $\sigma_{i,n}^{2}, i=1,2$ are given by 
\vspace{-1.5mm}
\begin{align}
        \sigma_{1,n}^{2} &= \frac{a_{1}^{2}( x_{n}^{2} + y_{n}^{2} + H^{2} )^{2} (x_{n}^{2}+ y_{n}^{2})}{\rho_{\text{r}} w_{n} y_{n}^{2}},  \label{fm:sgm1} \\
        \sigma_{2,n}^{2} &= \frac{a_{2}^{2}( x_{n}^{2} + y_{n}^{2} + H^{2} )^{2}}{\rho_{\text{r}} w_{n}},  \label{fm:sgm2}
    \vspace{-1.5mm}
\end{align}
and the measurement noise covariance matrix for $\mathbf{z}_{\text{m},n}$ can be derived as $\mathbf{Q}_{\text{m},n} = \mathrm{diag}(\sigma_{1,n}^{2},\sigma_{2,n}^{2})$. 
In (\ref{fm:sgm1}) and (\ref{fm:sgm2}), $a_{i},i=1,2$ represent the corresponding measurement capability coefficients calculated according to the system configurations and signal processing designs \cite{JunkunYan2015TSP,PB1}, and $\rho_{\text{r}}\in\mathbb{R}$ denotes the sensing power gain coefficient given by \cite{YFJ2024CL}
\vspace{-1.5mm}
\begin{equation}
    \rho_{\text{r}} = \frac{P_{\text{A}} N_{\text{sym}} N_{\text{t}} N_{\text{r}}}{\sigma^{2}} \left(\frac{\sigma_{\text{RCS}}\lambda^{2}}{(4\pi)^{3}}\right),
    \vspace{-1.5mm}
\end{equation}
where $P_{\text{A}}$ denotes the BS transmit power, $N_{\text{sym}}$ is matched-fitering gain accumulated during the whole time slot, $\sigma^{2}$ denotes the additive white Gaussian noise power at the receiver, $\sigma_{\text{RCS}}$ signifies the target radar cross-section, and $\lambda$ denotes the carrier wavelength \cite{MARichards}.

\subsubsection{Estimation Stage}

During the remaining period of the $n$th time slot, the BS generates the estimated state vector $\hat{\mathbf{x}}_{n}$ following the standard EKF procedures and then transmit ISAC signals with an updated transmit beamforming vector expressed as $\hat{\mathbf{f}}_{n} = \mathbf{a}(\hat{\theta}_{n}) = \mathbf{a}(\arctan(\hat{y}_{n}/\hat{x}_{n}))$ for a statistically more precise beam alignment.\footnote{Our considered predictive UAV tracking scheme can be extended to cases with other filtering techniques, such as the unscented Kalman fitering, to avoid the possibly non-negligible linearization error. 
Moreover, there might exist a time interval $T_{\text{Bu}}$ for the state estimation between the beginning of the estimation stage and the beamforming update due to practical reasons such as the computation delay.
In this case, the prediction and estimation stage duration can be calibrated as $w_{n}\Delta T + T_{\text{Bu}}$ and $(1 - w_{n})\Delta T - T_{\text{Bu}}$, respectively.
In this paper, $T_{\text{Bu}}$ and the computational overhead of EKF are assumed negligible thanks to the limited state vector dimension.}
The standard EKF procedures are given by the following steps \cite{MKay}.
\begin{enumerate}
    \item Obtaining the predicted state vector $\breve{\mathbf{x}}_{n}$.
    \item Linearization: $\mathbf{H}_{n} = \frac{\partial\mathbf{h}}{\partial\mathbf{x}_{n}}|_{\mathbf{x}_{n} = \breve{\mathbf{x}}_{n}}, \forall n$. 
    \item Calculating the prediction MSE matrix:
    \vspace{-1.5mm}
    \begin{equation}
        \mathbf{M}_{\text{p},n} = \mathbf{G}\mathbf{M}_{n-1}\mathbf{G}^{H} + \mathbf{Q}_{\text{p}}. \label{fm:Mpn}
    \vspace{-1.5mm}
    \end{equation}
    \item Calculating the Kalman gain matrix: 
    \vspace{-1.5mm}
    \begin{equation}
        \mathbf{K}_{n} = \mathbf{M}_{\text{p},n}\mathbf{H}_{n}^{H}(\mathbf{Q}_{\text{m},n} + \mathbf{H}_{n}\mathbf{M}_{\text{p},n}\mathbf{H}_{n}^{H} )^{-1}. \label{fm:efk-4}
    \vspace{-1.5mm}
    \end{equation}    
    \item Obtaining the estimated state vector: 
    \vspace{-1.5mm}
    \begin{equation}
        \hat{\mathbf{x}}_{n} = \breve{\mathbf{x}}_{n} + \mathbf{K}_{n}(\mathbf{w}_{n} - \mathbf{h}(\breve{\mathbf{x}}_{n})). \label{fm:efk-5}
    \vspace{-1.5mm}
    \end{equation} 
    \item Calculating the estimation MSE matrix: 
    \vspace{-1.5mm}
    \begin{equation}
        \!\!\!\!\!\!\!\!\!\!\!\!\mathbf{M}_{n} 
        \!=\! \left( \mathbf{I} - \mathbf{K}_{n} \mathbf{H}_{n} \right) \mathbf{M}_{\text{p},n}
        \!=\! \left( \mathbf{H}_{n}^{H} \mathbf{Q}_{\text{m},n}^{-1} \mathbf{H}_{n} + \mathbf{M}_{\text{p},n}^{-1} \right)^{-1}. \label{fm:efk-6}
    \vspace{-1.5mm}
    \end{equation}     
\end{enumerate}
The detailed derivation of (\ref{fm:efk-6}) can be referred to \cite{PB1}.
In (\ref{fm:efk-4}) and (\ref{fm:efk-6}), the expression of $\mathbf{H}_{n}$ is given by 
\vspace{-1.5mm}
\begin{align}
    \mathbf{H}_{n} = \begin{bmatrix}
        -\frac{\breve{y}_{n}}{\breve{x}_{n}^{2} + \breve{y}_{n}^{2}} & 0 & \frac{\breve{x}_{n}}{\breve{x}_{n}^{2} + \breve{y}_{n}^{2}} & 0  \\
        \frac{\breve{x}_{n}}{\sqrt{\breve{x}_{n}^{2} + \breve{y}_{n}^{2} + H^{2}}} & 0 & \frac{\breve{y}_{n}}{\sqrt{\breve{x}_{n}^{2} + \breve{y}_{n}^{2} + H^{2}}} & 0 
    \end{bmatrix}. \label{formu:Hn}
\end{align}

\subsection{Outage Capacity Characterization}

The outage capacity refers to the maximum rate maintained over the fading block such that the OP is less than a predetermined outage threshold $\varepsilon_{\rm{out}}$ \cite{goldsmith}, which can be characterized as follows in our considered system.
Given the considered predictive beamforming scheme and LoS-dominant channel model, the instantaneous received SNRs of the UAV at the prediction and estimation stage of the $n$th time slot can be represented by 
\vspace{-1.5mm}
\begin{equation}
    \!\!\!\!\!\gamma_{\text{p},n}
    = \frac{\tilde{P}|\mathbf{a}(\theta_{n})^{H}\mathbf{a}(\breve{\theta}_{n})|}{ x_{n}^{2} + y_{n}^{2} + H^{2} }, \ \ 
    \gamma_{\text{e},n}
    = \frac{\tilde{P}|\mathbf{a}(\theta_{n})^{H}\mathbf{a}(\hat{\theta}_{n})|}{ x_{n}^{2} + y_{n}^{2} + H^{2} }, \label{formu:SNR}
    \vspace{-1.5mm}
\end{equation}
respectively, where the coefficient $\tilde{P}$ is defined as $\tilde{P} \triangleq P_{\text{A}}\beta_{0}/\sigma^{2}$, and $\beta_{0} = (\lambda/4\pi)^{2}$ represents the channel power gain at the reference distance of 1 m. 
$\mathbf{a}(\cdot)$ denotes the transmitting steering vector expressed as 
\vspace{-1.5mm}
\begin{equation}
    \!\! \mathbf{a}(\theta_{n}) = [1, \mathrm{exp}(\mathrm{j}\pi\cos{\theta_{n}}), ..., \mathrm{exp}(\mathrm{j}(N_{\rm{t}} - 1)\pi\cos{\theta_{n}})]^{T}.
    \vspace{-1.5mm}
\end{equation}
Since the random factors in $\gamma_{\text{p},n}$ and $\gamma_{\text{e},n}$ (i.e., $\theta_{n}$, $\breve{\theta}_{n}$, $\hat{\theta}_{n}$, $x_{n}$, and $y_{n}$) are assumed constant within $\Delta T$ s, the BS-UAV channel fading can be modeled as slow flat fading with a coherence time of $\Delta T$ s \cite{goldsmith}.
Consequently, the complementary outage events (i.e., the events of the UAV not being in outage) at the prediction and estimation stage of the $n$th time slot are expressed as $\xi_{\text{p},n} \triangleq \gamma_{\text{p},n} - \breve{\gamma}_{n} \geq 0$ and $\xi_{\text{e},n} \triangleq \gamma_{\text{e},n} - \hat{\gamma}_{n} \geq 0$, respectively, where $\breve{\gamma}_{n}$ and $\hat{\gamma}_{n}$ denote the target constant received SNRs ensuring the OP less than $\varepsilon_{\rm{out}}$ at the corresponding stage. 
Then, the OPs at the two stages of the $n$th time slot can be uniformly expressed as
\vspace{-1.5mm}
\begin{equation}
    \!\!\!\!\zeta_{\iota,n} = \mathbb{P}\left( \xi_{\iota,n} < 0 \right) = 1 - \int_{\mathcal{Q}_{\iota,n}} f_{\iota}(\xi_{\iota,n}) \mathrm{d}\xi_{\iota,n},  \iota \in \{ \text{p}, \text{e} \}, \label{fm:OPexpress}
    \vspace{-1.5mm}
\end{equation}
where the set $\mathcal{Q}_{\iota,n} = \{\xi_{\iota,n}|\xi_{\iota,n} \geq 0\}$ is named as the complementary outage region (COR), $f_{\iota}(\cdot)$ denotes the probability density function (PDF) of $\xi_{\iota,n}$, and the subscript $\iota = \text{p}$ and $\iota = \text{e}$ denote the prediction and estimation stage, respectively.
As illustrated in Fig. \ref{fig:showCOR}, the COR is equivalent to the set constituted by all $(x_{n},y_{n})$ satisfying $\xi_{\iota,n} \geq 0$ due to (\ref{formu:SNR}).
Since $\zeta_{\iota,n}$ monotonically decreases with the decreasing of $\gamma_{\iota,n}$, $\zeta_{\text{p},n}(\breve{\gamma}_{n}) = \zeta_{\text{e},n}(\hat{\gamma}_{n}) = \varepsilon_{\rm{out}}$ hold. 
Accordingly, the outage capacities normalized by the bandwidth at the prediction and estimation stage of the $n$th time slot are expressed as 
\vspace{-1.5mm}
\begin{align}
    C_{\text{p},n} &= \log_{2}( 1 + \zeta_{\text{p},n}^{-1}(\varepsilon_{\rm{out}}) ) = \log_{2}( 1 + \breve{\gamma}_{n} ), \\
    C_{\text{e},n} &= \log_{2}( 1 + \zeta_{\text{e},n}^{-1}(\varepsilon_{\rm{out}}) ) = \log_{2}( 1 + \hat{\gamma}_{n} )
    \vspace{-1.5mm}
\end{align}
respectively, where $\zeta_{\iota,n}^{-1}(\cdot)$ denotes the inverse cumulative distribution function \cite{goldsmith}. 
The overall outage capacity at the $n$th time slot is represented by $C_{n} = w_{n}C_{\text{p},n} + (1 - w_{n})C_{\text{e},n}$.

\subsection{Problem Formulation}

In this paper, we propose a joint UAV tracking and outage capacity maximization scheme. 
To be specific, the predicted UAV trajectory $\breve{\mathbf{q}}_{n} = [\breve{x}_{n}, \breve{y}_{n}]^{T}$, sensing duration ratio $w_{n}$ and target constant received SNR vector $\bm{\gamma}_{n} = [\breve{\gamma}_{n},\hat{\gamma}_{n}]^{T}$ are jointly optimized to maximize the overall outage capacity at each time slot. 
The corresponding optimization problem is formulated as 
\vspace{-1.5mm}
\begin{align}
    (\mathrm{P1}): \ &\max_{ \{\breve{\mathbf{q}}_{n},w_{n},\bm{\gamma}_{n} \} } \ \ C_{n}  \label{opt-obj} \\
    \text{s.t.} \ 
    &\| \breve{\mathbf{q}}_{n} - \hat{\mathbf{q}}_{n-1} \| \leq v_{\text{A,max}}\Delta T, \tag{\ref{opt-obj}{a}} \label{opt-cstrt-a} \\
    &\breve{y}_{n} \geq y_{\text{min}}, \tag{\ref{opt-obj}{b}} \label{opt-cstrt-b} \\
    &w_{\text{min}} \leq w_{n} \leq w_{\text{max}}, \tag{\ref{opt-obj}{c}} \label{opt-cstrt-c} \\
    &\varkappa(\breve{\mathbf{q}}_{n},w_{n},\bm{\gamma}_{n}) \leq 0, \tag{\ref{opt-obj}{d}} \label{opt-cstrt-d} \\
    &\mathbf{0} \prec \bm{\gamma}_{n} \prec \gamma_{\text{max}}\bm{1}_{2}, \tag{\ref{opt-obj}{e}} \label{opt-cstrt-e} 
    \vspace{-1.5mm}
\end{align}
where $\hat{\mathbf{q}}_{n-1} = [\hat{x}_{n-1}, \hat{y}_{n-1}]^{T}$ denotes the estimated UAV trajectory at the $(n-1)$-th time slot, $v_{\text{A,max}}$ denotes the UAV maximum velocity, $\varkappa(\breve{\mathbf{q}}_{n},w_{n},\bm{\gamma}_{n}) = \max{ \left( \zeta_{\text{p},n}, \zeta_{\text{e},n} \right) } - \varepsilon_{\text{out}}$ represents the maximum OP at the $n$th time slot and $\gamma_{\text{max}} = \tilde{P} N_{\text{t}}/(y_{\text{min}}^{2} + H^{2})$ denotes the maximum target constant received SNR due to the maximum beamforming gain and the minimum path loss, respectively. 
In (P1), (\ref{opt-cstrt-a}) represents the maximum UAV velocity constraint, while (\ref{opt-cstrt-b}) represents a minimum $y$-axis coordinate constraint of a flyable zone.\footnote{In practice, the UAV position with $\breve{y}_{n} = 0$ leads to the infinite azimuth angle measurement noise variance. Thus, we consider a case where the UAV trajectory is constrained in an area with a nonzero minimum $y$-axis coordinate denoted by $y_{\text{min}} > 0$.} 
(\ref{opt-cstrt-c}), (\ref{opt-cstrt-d}) and (\ref{opt-cstrt-e}) denote the sensing duration ratio range, maximum tolerable OP, and SNR range constraints, respectively. 
It is non-trivial to solve (P1) because the objective function and the constraint (\ref{opt-cstrt-d}) are implicit functions of $\breve{\mathbf{q}}_{n}$, $w_{n}$, and $\bm{\gamma}_{n}$.

\begin{figure}[!t]
    \centering
    \includegraphics[width=0.4\textwidth]{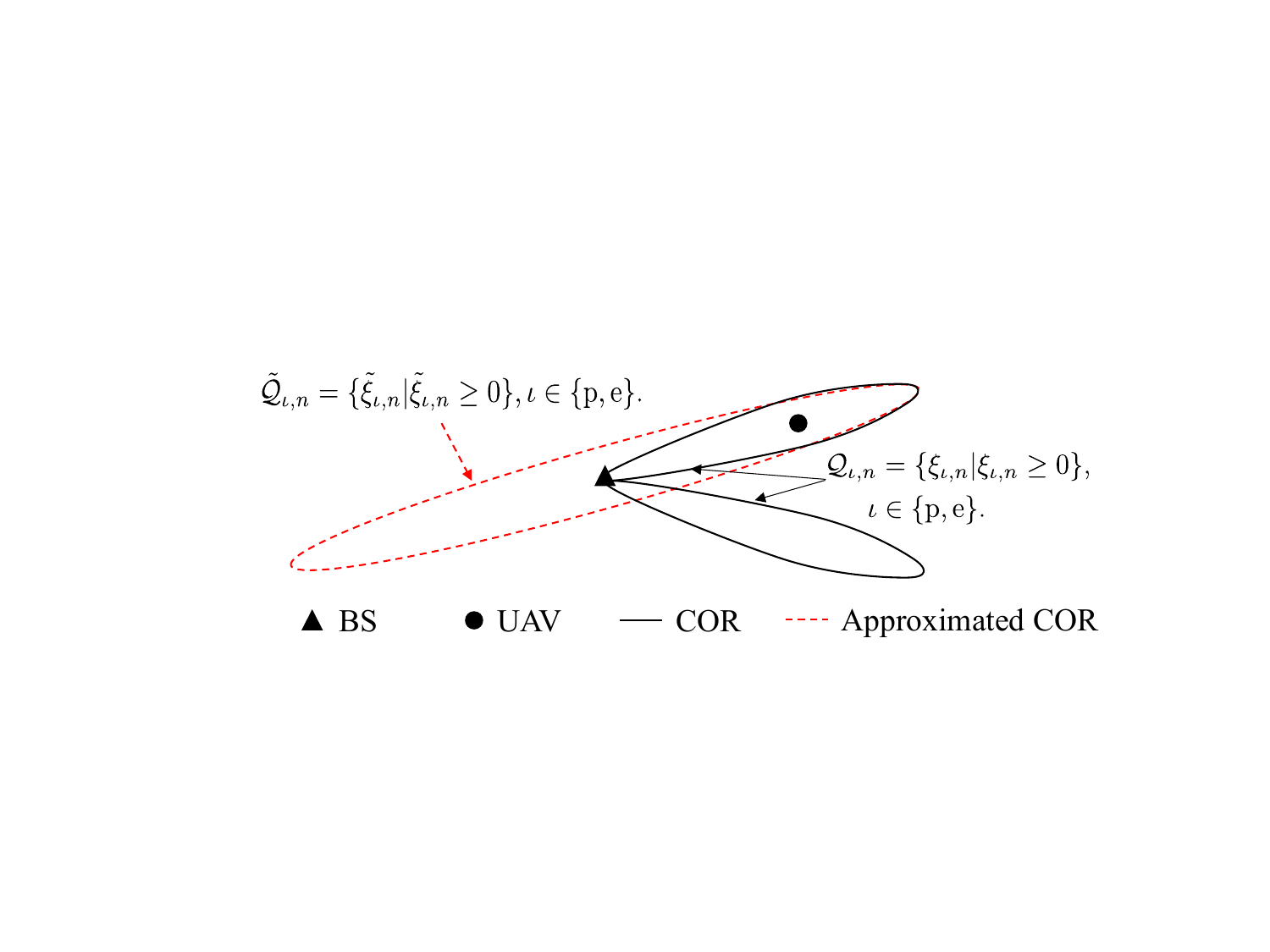}
    \caption{A geometric illustration of COR and approximated COR (aCOR).}
    \label{fig:showCOR}
    \vspace{-4mm}
\end{figure}

\section{Proposed OP Approximations}

To address the implicit objective function and constraint (\ref{opt-cstrt-d}) in (P1), OP approximations with analytical expressions are proposed in this section for the prediction and estimation stages by approximating the CORs via Taylor expansions, which makes it tractable to solve (P1).

\vspace{-0.5mm}
\subsection{Prediction Stage}

Let us denote the ground-truth UAV trajectory by $\mathbf{q}_{n} = [x_{n}, y_{n}]^{T}$. 
Then, based on assumption 1, it can be reasonably inferred that the UAV is consistently illuminated by the main lobe of the downlink transmitted beam thanks to the small prediction/estimation error. 
As a result, the beamforming gain from the BS can be expressed as
\vspace{-1.5mm}
\begin{equation}
    |\mathbf{a}(\theta_{n})^{H}\mathbf{a}(\breve{\theta}_{n})| = \frac{\sin\left(\frac{N_{\text{t}}\pi}{2} \kappa(\mathbf{q}_{n};\breve{\mathbf{q}}_{n}) \right)}{\sin\left(\frac{\pi}{2} \kappa(\mathbf{q}_{n};\breve{\mathbf{q}}_{n}) \right)}, \label{fm:bfgain}
    \vspace{-1.5mm}
\end{equation}
where the expression of $\kappa(\mathbf{q}_{n};\breve{\mathbf{q}}_{n})$ is given by
\begin{align}
    \kappa(\mathbf{q}_{n};\breve{\mathbf{q}}_{n}) 
    &\triangleq \cos{(\breve{\theta}_{n})} - \cos{(\theta_{n})} \notag \\
    &= \frac{\breve{x}_{n}}{\sqrt{\breve{x}_{n}^{2} + \breve{y}_{n}^{2}}} - \frac{x_{n}}{\sqrt{x_{n}^{2} + y_{n}^{2}}}. 
\end{align}
As such, the complementary outage event at the prediction stage, i.e., $\xi_{\text{p},n} \geq 0$, can be reformulated as
\vspace{-1.5mm}
\begin{equation}
    \frac{\sin\left(\frac{N_{\text{t}}\pi}{2} \kappa(\mathbf{q}_{n};\breve{\mathbf{q}}_{n}) \right)}{\sin\left(\frac{\pi}{2} \kappa(\mathbf{q}_{n};\breve{\mathbf{q}}_{n}) \right)} \geq \frac{\breve{\gamma}_{n}(x_{n}^{2} + y_{n}^{2} + H^{2})}{\tilde{P}}. \label{fm:cout-ineq}
    \vspace{-1.5mm}
\end{equation}
However, the left-hand side (LHS) of (\ref{fm:cout-ineq}) is intractable for calculating the integral in (\ref{fm:OPexpress}) and also overly complicated for the Taylor expansion w.r.t. $\breve{\mathbf{q}}_{n}$. 
To tackle this issue, we propose a two-step approximation detailed as follows. 

In the first step, the LHS of (\ref{fm:cout-ineq}) is approximated by its second-order Taylor expansion w.r.t. the function $\kappa(\cdot)$ at the point $\breve{\kappa}_{n} = 0$, yielding: 
\vspace{-1.5mm}
\begin{equation}
    \frac{\sin\left(\frac{N_{\text{t}}\pi}{2} \kappa(\mathbf{q}_{n};\breve{\mathbf{q}}_{n}) \right)}{\sin\left(\frac{\pi}{2} \kappa(\mathbf{q}_{n};\breve{\mathbf{q}}_{n}) \right)}
    \approx N_{\text{t}} - M \kappa(\mathbf{q}_{n};\breve{\mathbf{q}}_{n})^{2}, \label{fm:Taylor-1}
    \vspace{-1.5mm}
\end{equation}
with $M = \frac{ N_{\text{t}} \pi^{2} (N_{\text{t}}^{2} - 1) }{24}$. 
Then, the COR can be approximated by $\mathcal{Q}_{\text{p},n} \approx \mathcal{Q}_{\text{pa},n} = \{ \xi_{\text{pa},n} | \xi_{\text{pa},n} \geq 0\}$ with 
\vspace{-1.5mm}
\begin{equation}
    \xi_{\text{pa},n} = \kappa(\mathbf{q}_{n};\breve{\mathbf{q}}_{n})^{2} + \frac{\breve{\gamma}_{n}(x_{n}^{2} + y_{n}^{2} + H^{2})}{\tilde{P} M} - \frac{N_{\text{t}}}{M} \geq 0. \label{fm:cout-ineq-aprx}
    \vspace{-1.5mm}
\end{equation}
In (\ref{fm:cout-ineq-aprx}), the function $\kappa(\mathbf{q}_{n};\breve{\mathbf{q}}_{n})$ remains challenging to handle due to its fractional structure. 
Thus, the second step is to approximate $\xi_{\text{pa},n}$ by the second-order Taylor expansion w.r.t. the ground-truth UAV trajectory $\mathbf{q}_{n}$ at the point $\mathbf{q}_{n} = \breve{\mathbf{q}}_{n}$ and reformulate (\ref{fm:cout-ineq-aprx}) as
\vspace{-1.5mm}
\begin{equation}
    \xi_{\text{pa},n} \approx \tilde{\xi}_{\text{p},n} = \frac{1}{2} \acute{\mathbf{q}}_{n}^{T} \tilde{\bm{\xi}}_{\text{p},n}^{(2)} \acute{\mathbf{q}}_{n}  + (\tilde{\bm{\xi}}_{\text{p},n}^{(1)})^{T} \acute{\mathbf{q}}_{n} + \tilde{\xi}_{\text{p},n}^{(0)} \geq 0, \label{fm:cout-ineq-aprx2}
    \vspace{-1.5mm}
\end{equation}
where the vector $\acute{\mathbf{q}}_{n} = \mathbf{q}_{n} - \breve{\mathbf{q}}_{n} = [\acute{x}_{n},\acute{y}_{n}]^{T}$ represents the deviation of predicted UAV trajectory from the ground-truth UAV trajectory, $\tilde{\bm{\xi}}_{\text{p},n}^{(2)}$ and $\tilde{\bm{\xi}}_{\text{p},n}^{(1)}$ denote the Hessian matrix and gradient of the LHS of (\ref{fm:cout-ineq-aprx}) w.r.t. the ground-truth UAV trajectory $\mathbf{q}_{n}$, respectively.
The specific expressions of $\tilde{\bm{\xi}}_{\text{p},n}^{(2)}$, $\tilde{\bm{\xi}}_{\text{p},n}^{(1)}$ and $\tilde{\xi}_{\text{p},n}^{(0)}$ are given by 
\begin{align}
    \tilde{\bm{\xi}}_{\text{p},n}^{(2)} = 
    \begin{bmatrix}
        2\tilde{\xi}_{n}^{(20)} & \tilde{\xi}_{n}^{(11)} \\
        \tilde{\xi}_{n}^{(11)} & 2\tilde{\xi}_{n}^{(02)}
    \end{bmatrix}, \  \ 
    \tilde{\bm{\xi}}_{\text{p},n}^{(1)} = 
    \begin{bmatrix}
        \tilde{\xi}_{n}^{(10)} \\
        \tilde{\xi}_{n}^{(01)}
    \end{bmatrix}, 
\end{align}
with 
\begin{align}
    &\tilde{\xi}_{n}^{(20)} = \frac{\breve{y}_{n}^{4}}{ (\breve{x}_{n}^{2} + \breve{y}_{n}^{2})^{3} } + \frac{\breve{\gamma}_{n}}{M \tilde{P}}, \ 
    \tilde{\xi}_{n}^{(11)} = \frac{ -2\breve{x}_{n}\breve{y}_{n}^{3} }{ (\breve{x}_{n}^{2} + \breve{y}_{n}^{2})^{3} }, \label{fm:txi20} \\
    &\tilde{\xi}_{n}^{(02)} = \frac{\breve{x}_{n}^{2}\breve{y}_{n}^{2}}{ (\breve{x}_{n}^{2} + \breve{y}_{n}^{2})^{3} } + \frac{\breve{\gamma}_{n}}{M \tilde{P}}, \
    \tilde{\xi}_{n}^{(10)} = \frac{ 2\breve{\gamma}_{n}\breve{x}_{n} }{ M \tilde{P} }, \\
    &\tilde{\xi}_{n}^{(01)} = \frac{ 2\breve{\gamma}_{n}\breve{y}_{n} }{ M \tilde{P} }, \
    \tilde{\xi}_{\text{p},n}^{(0)} = \frac{(\breve{x}_{n}^{2} + \breve{y}_{n}^{2} + H^{2})\breve{\gamma}_{n}}{\tilde{P} M } - \frac{N_{\text{t}}}{M}. \label{fm:txi0}
\end{align}
Through (\ref{fm:Taylor-1}) and (\ref{fm:cout-ineq-aprx2}), the COR at the prediction stage can be approximated by $\mathcal{Q}_{\text{p},n} \approx \mathcal{Q}_{\text{pa},n} \approx \tilde{\mathcal{Q}}_{\text{p},n} = \{ \tilde{\xi}_{\text{p},n} | \tilde{\xi}_{\text{p},n} \geq 0 \}$, where the set $\tilde{\mathcal{Q}}_{\text{p},n}$ denotes the aCOR.
Note that it can be easily obtained that $\mathrm{det}(\tilde{\bm{\xi}}_{\text{p},n}^{(2)})>0$ holds due to $\breve{\gamma}_{n} > 0$. 
Consequently, the aCOR boundary denoted by $\tilde{\xi}_{\text{p},n} = 0$ represents an ellipse on the $(\acute{x}_{n},\acute{y}_{n})$ plane, as illustrated in Fig. \ref{fig:showCOR}.
Given the aCOR expression, our proposed approximated OP at the prediction stage of the $n$th time slot is provided in the following proposition. 

\begin{proposition}
    Given assumption 1, the OP at the prediction stage of the $n$th time slot can be approximated by 
    \begin{align}
        \zeta_{\text{p},n}
        \approx \zeta_{\text{p},n}|_{\mathcal{Q}_{\text{p},n} = \tilde{\mathcal{Q}}_{\text{p},n} }
        = \tilde{\zeta}_{\text{p},n}
        = 1 - \mathbb{E}_{\acute{x}_{n}}\left[ \breve{\chi}(\acute{x}_{n}) \right], \label{fm:aCOP}
    \end{align}
    with 
    \begin{align}
        \!\!\!\!\breve{\chi}(\acute{x}_{n}) \triangleq 
        \left\{ \begin{aligned}
            &\frac{\mathrm{erf}\left(  \breve{\chi}_{\text{U}}(\acute{x}_{n}) \right) - \mathrm{erf}\left(  \breve{\chi}_{\text{L}}(\acute{x}_{n}) \right)}{2}, \!\!\!\!&\acute{x}_{n} \in [\acute{x}_{\text{L}}, \acute{x}_{\text{U}}], \\
            &0, \!\!\!\!&\text{Otherwise}, 
        \end{aligned}
        \right. \label{fm:abOP-1}
    \end{align}
    where $\acute{x}_{n}$ follows a zero mean Gaussian distribution with a variance of $\breve{\Lambda}_{\text{x},n}^{2} = [\mathbf{M}_{\text{p},n}]_{11}$. 
    The specific expressions of $\breve{\chi}_{\text{U}}(\cdot)$, $\breve{\chi}_{\text{L}}(\cdot)$, $\acute{x}_{\text{U}}$, and $\acute{x}_{\text{L}}$ are given by (\ref{fm:abOP-2})-(\ref{fm:abOP-4}), respectively.
\end{proposition}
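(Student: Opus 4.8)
The plan is to express $\tilde{\zeta}_{\text{p},n}$ as a bivariate Gaussian probability over the aCOR and then evaluate it by one-dimensional slicing. By the two Taylor steps preceding the statement, the approximated outage event is (\ref{fm:cout-ineq-aprx2}), i.e. $\{\tilde{\xi}_{\text{p},n}(\acute{\mathbf{q}}_n)\ge 0\}$. Since $\tilde{\bm{\xi}}_{\text{p},n}^{(2)}$ is positive definite --- precisely the already-noted $\tilde{\xi}_n^{(20)}\tilde{\xi}_n^{(02)}-(\tilde{\xi}_n^{(11)}/2)^{2}>0$ together with $\tilde{\xi}_n^{(20)}>0$ whenever $\breve{\gamma}_n>0$ --- the function $\tilde{\xi}_{\text{p},n}$ is an upward paraboloid in $\acute{\mathbf{q}}_n$, so this event is the complement of the filled ellipse $\mathcal{E}_n=\{\acute{\mathbf{q}}_n:\tilde{\xi}_{\text{p},n}(\acute{\mathbf{q}}_n)\le 0\}$, which is exactly the aCOR (if $\mathcal{E}_n$ is empty, i.e. the target SNR is infeasible, then trivially $\tilde{\zeta}_{\text{p},n}=1$). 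Therefore $\tilde{\zeta}_{\text{p},n}=\mathbb{P}(\acute{\mathbf{q}}_n\notin\mathcal{E}_n)=1-\mathbb{P}(\acute{\mathbf{q}}_n\in\mathcal{E}_n)$, which already has the form of (\ref{fm:aCOP}); it remains to compute $\mathbb{P}(\acute{\mathbf{q}}_n\in\mathcal{E}_n)$.

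First I would pin down the distribution of the prediction-error vector $\acute{\mathbf{q}}_n=\mathbf{q}_n-\breve{\mathbf{q}}_n$. From (\ref{fm:pre}), $\mathbf{x}_n-\breve{\mathbf{x}}_n=\mathbf{G}(\mathbf{x}_{n-1}-\hat{\mathbf{x}}_{n-1})+\mathbf{z}_{\text{p},n}$. Under the EKF model and Assumption 1, the one-step estimation error $\mathbf{x}_{n-1}-\hat{\mathbf{x}}_{n-1}$ is treated as zero-mean Gaussian with covariance $\mathbf{M}_{n-1}$ and independent of $\mathbf{z}_{\text{p},n}\sim\mathcal{N}(\mathbf{0},\mathbf{Q}_{\text{p}})$, hence $\mathbf{x}_n-\breve{\mathbf{x}}_n\sim\mathcal{N}(\mathbf{0},\mathbf{M}_{\text{p},n})$ with $\mathbf{M}_{\text{p},n}=\mathbf{G}\mathbf{M}_{n-1}\mathbf{G}^{H}+\mathbf{Q}_{\text{p}}$ as in (\ref{fm:Mpn}). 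Restricting to the position coordinates (the first and third entries of the state vector) gives $\acute{\mathbf{q}}_n$ zero-mean Gaussian with covariance the corresponding $2\times 2$ submatrix of $\mathbf{M}_{\text{p},n}$; in particular $\acute{x}_n\sim\mathcal{N}(0,[\mathbf{M}_{\text{p},n}]_{11})$, which is the variance $\breve{\Lambda}_{\text{x},n}^{2}$ claimed in the proposition, and $\acute{y}_n\mid\acute{x}_n$ is Gaussian with the usual conditional mean and variance assembled from $[\mathbf{M}_{\text{p},n}]_{11}$, $[\mathbf{M}_{\text{p},n}]_{13}$, $[\mathbf{M}_{\text{p},n}]_{33}$.

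Next I would perform the slicing. Written out, $\tilde{\xi}_{\text{p},n}\le 0$ is a quadratic inequality in $\acute{y}_n$ with leading coefficient $\tilde{\xi}_n^{(02)}>0$, so for each fixed $\acute{x}_n$ it holds exactly for $\acute{y}_n\in[\acute{y}_{\text{L}}(\acute{x}_n),\acute{y}_{\text{U}}(\acute{x}_n)]$, the closed interval between its two roots, and those roots are real precisely when the discriminant $\mathcal{D}(\acute{x}_n)\ge 0$. Since $\mathcal{D}$ is a quadratic in $\acute{x}_n$ whose leading coefficient equals $(\tilde{\xi}_n^{(11)})^{2}-4\tilde{\xi}_n^{(02)}\tilde{\xi}_n^{(20)}=-4\bigl(\tilde{\xi}_n^{(20)}\tilde{\xi}_n^{(02)}-(\tilde{\xi}_n^{(11)}/2)^{2}\bigr)<0$, the set $\{\mathcal{D}(\acute{x}_n)\ge 0\}$ is exactly an interval $[\acute{x}_{\text{L}},\acute{x}_{\text{U}}]$ --- the horizontal extent of $\mathcal{E}_n$ --- and is empty off it. Integrating the Gaussian conditional density of $\acute{y}_n$ over $[\acute{y}_{\text{L}}(\acute{x}_n),\acute{y}_{\text{U}}(\acute{x}_n)]$ yields a difference of Gaussian CDFs, which by $\Phi(t)=\frac{1}{2}(1+\mathrm{erf}(t/\sqrt{2}))$ equals $\breve{\chi}(\acute{x}_n)$ as in (\ref{fm:abOP-1}) with $\breve{\chi}_{\text{U}},\breve{\chi}_{\text{L}}$ the standardized $\acute{y}_{\text{U}},\acute{y}_{\text{L}}$, and $\breve{\chi}(\acute{x}_n)=0$ off $[\acute{x}_{\text{L}},\acute{x}_{\text{U}}]$. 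Taking the outer expectation over $\acute{x}_n$ gives $\mathbb{P}(\acute{\mathbf{q}}_n\in\mathcal{E}_n)=\mathbb{E}_{\acute{x}_n}[\breve{\chi}(\acute{x}_n)]$, hence (\ref{fm:aCOP}). Solving the quadratic for $\acute{y}_{\text{U}},\acute{y}_{\text{L}}$, solving $\mathcal{D}=0$ for $\acute{x}_{\text{U}},\acute{x}_{\text{L}}$, and standardizing are then routine and produce the explicit expressions (\ref{fm:abOP-2})--(\ref{fm:abOP-4}).

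The main obstacle is the slicing step: one must carefully exploit the sign of $\tilde{\xi}_n^{(02)}$ and the sign of the discriminant's leading coefficient (both controlled by the positive-definiteness inequality already isolated in the text) to conclude that the feasible $\acute{y}_n$-set is precisely the interval between the two roots and that this interval is nonempty exactly for $\acute{x}_n\in[\acute{x}_{\text{L}},\acute{x}_{\text{U}}]$; everything after that is bookkeeping with Gaussian CDFs. A secondary point needing a sentence of justification is the Gaussianity of $\acute{\mathbf{q}}_n$ with the claimed covariance, which is the standard EKF linear-Gaussian approximation and is legitimate here under Assumption 1.
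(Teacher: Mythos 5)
Your proposal is correct and follows essentially the same route as the paper's Appendix A: establish $\acute{\mathbf{q}}_{n}\sim\mathcal{N}(\mathbf{0},\bm{\breve{\Lambda}}_{n})$ from the EKF Gaussian approximation via (\ref{fm:pre}) and (\ref{fm:Mpn}), then write $\tilde{\zeta}_{\text{p},n}$ as one minus the integral of the bivariate Gaussian density over the elliptical aCOR, evaluated as an iterated integral whose inner ($\acute{y}_{n}$) part reduces to the erf difference in (\ref{fm:abOP-1}). The only difference is that you spell out the slicing and standardization details (positive-definiteness of $\tilde{\bm{\xi}}_{\text{p},n}^{(2)}$, the discriminant argument for $[\acute{x}_{\text{L}},\acute{x}_{\text{U}}]$, the conditional-Gaussian computation) that the paper leaves implicit.
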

\begin{proof}
    Please refer to Appendix A. 
\end{proof}

\begin{figure*}[!t]
    \vspace{-5mm}
    \begin{align}
        &\breve{\chi}_{\text{U}}(\acute{x}_{n}) 
        = \frac{ \breve{\Lambda}_{\text{x},n}^{2}\acute{y}_{\text{U}}(\acute{x}_{n}) - \breve{\Lambda}_{\text{xy},n}^{2} \acute{x}_{n} }{\sqrt{2|\det{(\bm{\breve{\Lambda}}_{n})}|}\breve{\Lambda}_{\text{x},n}}, 
        \breve{\chi}_{\text{L}}(\acute{x}_{n}) 
        = \frac{ \breve{\Lambda}_{\text{x},n}^{2}\acute{y}_{\text{L}}(\acute{x}_{n}) - \breve{\Lambda}_{\text{xy},n}^{2} \acute{x}_{n} }{\sqrt{2|\det{(\bm{\breve{\Lambda}}_{n})}|}\breve{\Lambda}_{\text{x},n}}, 
        \acute{x}_{\text{U}} = -\breve{x}_{n} + \sqrt{ -\frac{Y_{1}}{Y_{2}} },
        \acute{x}_{\text{L}} = -\breve{x}_{n} - \sqrt{ -\frac{Y_{1}}{Y_{2}} },  \label{fm:abOP-2} \\
        &\acute{y}_{\text{U}}(\acute{x}_{n}) 
        = -\breve{y}_{n} + Y_{0}(\acute{x}_{n} + \breve{x}_{n}) + \sqrt{Y_{1} + Y_{2}(\acute{x}_{n} + \breve{x}_{n})^{2}}, 
        \acute{y}_{\text{L}}(\acute{x}_{n}) 
        = -\breve{y}_{n} + Y_{0}(\acute{x}_{n} + \breve{x}_{n}) - \sqrt{Y_{1} + Y_{2}(\acute{x}_{n} + \breve{x}_{n})^{2}}, \\
        &Y_{0} = \frac{\breve{x}_{n}\breve{y}_{n}^{3}\tilde{P} M}{(\breve{x}_{n}^{2} + \breve{y}_{n}^{2})^{3}\breve{\gamma}_{n} + \breve{x}_{n}^{2} \breve{y}_{n}^{2} \tilde{P} M}, 
        Y_{1} = \frac{(\tilde{P} N_{\text{t}} - H^{2} \breve{\gamma}_{n})(\breve{x}_{n}^{2} + \breve{y}_{n}^{2})^{3}}{(\breve{x}_{n}^{2} + \breve{y}_{n}^{2})^{3}\breve{\gamma}_{n} + \breve{x}_{n}^{2} \breve{y}_{n}^{2} \tilde{P} M}, 
        Y_{2} = -\frac{(\breve{x}_{n}^{2} + \breve{y}_{n}^{2})^{4}((\breve{x}_{n}^{2} + \breve{y}_{n}^{2})^{2}\breve{\gamma}_{n}^{2} + \tilde{P} M \breve{y}_{n}^{2}\breve{\gamma}_{n} )}{((\breve{x}_{n}^{2} + \breve{y}_{n}^{2})^{3}\breve{\gamma}_{n} + \breve{x}_{n}^{2} \breve{y}_{n}^{2} \tilde{P} M)^{2}}, \label{fm:abOP-4} 
        \vspace{-5mm}
    \end{align}
    \rule{18.2cm}{0.5pt}
    \vspace{-10mm}
\end{figure*}

\vspace{-3mm}
\subsection{Estimation Stage}
Let us denote the estimated UAV trajectory at the $n$th time slot by $\hat{\mathbf{q}}_{n} = [\hat{x}_{n}, \hat{y}_{n}]^{T}$ and define $\grave{\mathbf{q}}_{n}$ as $\grave{\mathbf{q}}_{n} \triangleq \mathbf{q}_{n} - \hat{\mathbf{q}}_{n} = [\grave{x}_{n}, \grave{y}_{n}]^{T}$.
Note that $\hat{\mathbf{q}}_{n}$ is unknown at the $(n-1)$th time slot and only available after receiving the echo signals at the $n$th time slot. 
To address this issue, the approximation $\hat{\mathbf{q}}_{n} \approx \breve{\mathbf{q}}_{n}$ is reasonably applied thanks to assumption 1 so that the OP at the estimation stage of the $n$th time slot can be approximately calculated at the $(n-1)$th time slot. 
Then, similar as the derivation process from (\ref{fm:bfgain}) to (\ref{fm:cout-ineq-aprx2}), the aCOR at the estimation stage of the $n$th time slot can be expressed as $\tilde{\mathcal{Q}}_{\text{e},n} = \{ \tilde{\xi}_{\text{e},n} | \tilde{\xi}_{\text{e},n} \geq 0 \}$ with 
\begin{align}
    \tilde{\xi}_{\text{e},n} = \frac{1}{2} \grave{\mathbf{q}}_{n}^{T} \tilde{\bm{\xi}}_{\text{e},n}^{(2)} \grave{\mathbf{q}}_{n}  + (\tilde{\bm{\xi}}_{\text{e},n}^{(1)})^{T} \grave{\mathbf{q}}_{n} + \tilde{\xi}_{\text{e},n}^{(0)},
\end{align}
where $\tilde{\bm{\xi}}_{\text{e},n}^{(2)}$, $\tilde{\bm{\xi}}_{\text{e},n}^{(1)}$, and $\tilde{\xi}_{\text{e},n}^{(0)}$ are given by $\tilde{\bm{\xi}}_{\text{e},n}^{(2)} = \tilde{\bm{\xi}}_{\text{p},n}^{(2)}|_{\breve{\gamma}_{n} = \hat{\gamma}_{n}}$, $\tilde{\bm{\xi}}_{\text{e},n}^{(1)} = \tilde{\bm{\xi}}_{\text{p},n}^{(1)}|_{\breve{\gamma}_{n} = \hat{\gamma}_{n}}$, and $\tilde{\xi}_{\text{e},n}^{(0)} = \tilde{\xi}_{\text{p},n}^{(0)}|_{\breve{\gamma}_{n} = \hat{\gamma}_{n}}$, respectively. 

Meanwhile, similar as the derivation process in the proof of Proposition 1, $\grave{\mathbf{q}}_{n} \sim \mathcal{N}(\mathbf{0},\bm{\hat{\Lambda}}_{n})$ approximately holds with 
\begin{align}
    \bm{\hat{\Lambda}}_{n} &= 
    \begin{bmatrix}
        \hat{\Lambda}_{\text{x},n}^{2} & \hat{\Lambda}_{\text{xy},n}^{2} \\
        \hat{\Lambda}_{\text{xy},n}^{2} & \hat{\Lambda}_{\text{y},n}^{2}
    \end{bmatrix} 
    \approx 
    \begin{bmatrix}
        [\mathbf{M}_{n}]_{11} & [\mathbf{M}_{n}]_{13} \\
        [\mathbf{M}_{n}]_{31} & [\mathbf{M}_{n}]_{33}
    \end{bmatrix},
\end{align}
and the OP at the estimation stage of the $n$th time slot can be approximated by 
\vspace{-1.5mm}
\begin{equation}
    \zeta_{\text{e},n} 
    \approx \zeta_{\text{p},n}|_{\mathcal{Q}_{\text{e},n} = \tilde{\mathcal{Q}}_{\text{e},n}}
    = \tilde{\zeta}_{\text{e},n} = 1 - \mathbb{E}_{\grave{x}_{n}}\left[ \hat{\chi}(\grave{x}_{n}) \right], \label{fm:ahCOP}
    \vspace{-1.5mm}
\end{equation}
with 
\begin{align}
    \!\!\!\!\!\!\hat{\chi}(\grave{x}_{n}) \triangleq 
        \left\{ \begin{aligned}
            &\frac{\mathrm{erf}\left(  \hat{\chi}_{\text{U}}(\grave{x}_{n}) \right) - \mathrm{erf}\left(  \hat{\chi}_{\text{L}}(\grave{x}_{n}) \right)}{2}, \!\!\!\!\!\!\!\!\!\!\!\!\!\!\!\!\!\!\!\!\!\!\!\!\!\!\!\!\!\!\!\!\!\!\!\! &\grave{x}_{n} \in [\grave{x}_{\text{L}}, \grave{x}_{\text{U}}], \\
            &0, \!\!\!\!\!\!\!\!\!\!\!\!\!\!\!\!\!\!\!\!\!\!\!\!\!\!\!\!\!\!\!\!\!\!\!\! &\grave{x}_{n} \in (-\infty, \grave{x}_{\text{L}}) \cup (\grave{x}_{\text{U}}, \infty). 
        \end{aligned}
        \right. \label{fm:ahOP-1} 
\end{align}
The expressions of functions $\hat{\chi}_{\text{U}}(\grave{x}_{n})$ and $\hat{\chi}_{\text{L}}(\grave{x}_{n})$ are given by 
\begin{align}
    \hat{\chi}_{\text{U}}(\grave{x}_{n}) 
    &= \frac{ \hat{\Lambda}_{\text{x},n}^{2}\grave{y}_{\text{U}}(\grave{x}_{n}) - \hat{\Lambda}_{\text{xy},n}^{2} \grave{x}_{n} }{\sqrt{2|\det{(\bm{\hat{\Lambda}}_{n})}|}\hat{\Lambda}_{\text{x},n}}, \\
    \hat{\chi}_{\text{L}}(\grave{x}_{n}) 
    &= \frac{ \hat{\Lambda}_{\text{x},n}^{2}\grave{y}_{\text{L}}(\grave{x}_{n}) - \hat{\Lambda}_{\text{xy},n}^{2} \grave{x}_{n} }{\sqrt{2|\det{(\bm{\hat{\Lambda}}_{n})}|}\hat{\Lambda}_{\text{x},n}},
\end{align}
with $\grave{x}_{\text{L}} = \acute{x}_{\text{L}}|_{\breve{\gamma}_{n} = \hat{\gamma}_{n}}$, $\grave{x}_{\text{U}} = \acute{x}_{\text{U}}|_{\breve{\gamma}_{n} = \hat{\gamma}_{n}}$, $\grave{y}_{\text{L}} = \acute{y}_{\text{L}}|_{\breve{\gamma}_{n} = \hat{\gamma}_{n}}$ and $\grave{y}_{\text{U}} = \acute{y}_{\text{U}}|_{\breve{\gamma}_{n} = \hat{\gamma}_{n}}$, respectively.

\begin{remark}
    Estentially, our proposed OP approximations are derived from aCORs and thus the approximation accuracies mainly depend on the two-step Taylor expansion approximation accuracies. 
    Although the aCOR seems quite different from the COR as shown in Fig. \ref{fig:showCOR}, our proposed approximations still achieve satisfactory accuracies when the PDFs of $\breve{\mathbf{q}}_{n}$ and $\hat{\mathbf{q}}_{n}$ are highly concentrated in a neighborhood of $\mathbf{q}_{n}$ contained by both the COR and the aCOR, in which $\|\breve{\mathbf{q}}_{n} - \mathbf{q}_{n}\|$ and $\|\hat{\mathbf{q}}_{n} - \mathbf{q}_{n}\|$ are sufficiently small. 
    Fortunately, the existence of such neighborhood is theoretically guaranteed by assumption 1 and the property of Taylor expansions, which validates our proposed approximations. 
    Simulations further verify the approximation accuracies in Section V. 
    Furthermore, (\ref{fm:aCOP}) and (\ref{fm:ahCOP}) show that \emph{the sensing accuracy characterized by MSE matrices decides the integral over aCOR, which further influences the communication reliability}.
\end{remark}

\section{Proposed Algorithms}

Given the approximated OPs presented in (\ref{fm:aCOP}) and (\ref{fm:ahCOP}), (P1) can be reformulated into an approximated optimization problem as:  
\vspace{-1.5mm}
\begin{align}
    (\mathrm{P2}): \ &\max_{ \{ \breve{\mathbf{q}}_{n}, w_{n}, \bm{\gamma}_{n} \} } \ \ C_{n} \label{opt2-obj} \\
    \text{s.t.} \ 
    &\text{(\ref{opt-cstrt-a})-(\ref{opt-cstrt-c}), (\ref{opt-cstrt-e})}, \notag \\
    &\tilde{\varkappa}(\breve{\mathbf{q}}_{n}, w_{n}, \bm{\gamma}_{n}) \leq 0, \tag{\ref{opt2-obj}{a}} \label{opt2-cstrt-a} 
    \vspace{-1.5mm}
\end{align}
with $\tilde{\varkappa}(\breve{\mathbf{q}}_{n}, w_{n},\bm{\gamma}_{n}) = \max{ ( \tilde{\zeta}_{\text{p},n}, \tilde{\zeta}_{\text{e},n} ) } - \varepsilon_{\text{out}}$. 
Compared with (P1), the original implicit constraint (\ref{opt-cstrt-d}) in (P1) has been replaced by the approximated outage constraint (\ref{opt2-cstrt-a}). 
However, (P2) remains challenging to be optimally solved due to non-convex constraint (\ref{opt2-cstrt-a}) and the coupling among the optimization variables. 
To address this issue, an algorithm based on bisection search is proposed to obtain an efficient solution to (P2) with guaranteed convergence. 
To further reduce computational complexity, a second efficient algorithm is proposed based on AO. 

\begin{algorithm}[!t]
	\caption{Proposed search-based algorithm for (P2).}
	{\begin{algorithmic}[1] 
		\STATE {Initialize the case indicator $l_{\text{C}}$ ($l_{\text{C}} = 0$ for the case with $C_{\text{p},n} \geq C_{\text{e},n}$ or $l_{\text{C}} = 1$ otherwise), the error tolerance $\delta_{\text{C}}$, $C_{\text{max}} = \log_{2}(1+\gamma_{\rm{max}})$, and $i=1$.} 
        \STATE {Set the searching range $[C_{i}^{\text{L}}, C_{i}^{\text{U}}] = [0, C_{\text{max}}]$.}
        \FOR {$l_{\text{C}} = 0$ to $1$}
            \WHILE{$| C_{i}^{\text{U}} - C_{i}^{\text{L}} | \leq \delta_{\text{C}}$ is satisfied.}
            \STATE {Obtain $C_{i} = (C_{i}^{\text{L}} + C_{i}^{\text{U}})/2$ and solve (P2.1) and (P2.2) iteratively by Algorithm 2 given $C_{i}$ and $l_{\text{C}}$.}
            \STATE {Obtain the outputted feasibility indicator $l_{\text{f}}$.}
            \IF {$l_{\text{f}} = 0$}
                \STATE {Update $[C_{i+1}^{\text{L}}, C_{i+1}^{\text{U}}] = [C_{i}^{\text{L}}, C_{i}]$.}
            \ELSE 
                \STATE {Update $[C_{i+1}^{\text{L}}, C_{i+1}^{\text{U}}] = [C_{i}, C_{i}^{\text{U}}]$.}
            \ENDIF
            \STATE {Update $i = i+1$.}
        \ENDWHILE
        \ENDFOR
        \STATE {Compare $C_{i}$ between the case with $l_{\text{C}} = 0$ and $l_{\text{C}} = 1$ and output the larger value as the maximized outage capacity.}
	\end{algorithmic}}
\end{algorithm}

\subsection{Search-Based Algorithm} 

To decouple $\breve{\mathbf{q}}_{n}$ from $w_{n}$ and $\bm{\gamma}_{n}$, our proposed search-based algorithm solves (P2) by iteratively solving two subproblems formulated as 
\vspace{-1.5mm}
\begin{align}
    (\mathrm{P2.1}): \ &\mathrm{Find} \ \ w_{n}, \bm{\gamma}_{n}  \label{opt21-obj}  \\
    \text{s.t.} \ 
    &\text{(\ref{opt-cstrt-c}), (\ref{opt-cstrt-e})}, \notag \\
    &C_{n} = C_{i}, \tag{\ref{opt21-obj}{a}} \label{opt21-cstrt-a} 
    \vspace{-1.5mm}
\end{align}
and
\vspace{-1.5mm}
\begin{align}
    (\mathrm{P2.2}): \ \min_{ \breve{\mathbf{q}}_{n} } \ \ \tilde{\varkappa}(\breve{\mathbf{q}}_{n}, w_{i}, \bm{\gamma}_{i}) \ \ 
    \text{s.t.} \ \  \text{(\ref{opt-cstrt-a}), (\ref{opt-cstrt-b}),} \notag 
    \vspace{-1.5mm}
\end{align}
respectively, where $C_{i}$ and $(w_{i},\bm{\gamma}_{i})$ denote a given objective value and the solution to (P2.1) in the $i$-th iteration, respectively. 
Our proposed search-based algorithm solves (P2.1) to generate a candidate solution $(w_{i},\bm{\gamma}_{i})$ and subsequently evaluate its feasibility to (P2) by solving (P2.2) in the $i$-th iteration, as summarized in Algorithm 1. 

To address the non-convex constraint (\ref{opt21-cstrt-a}), (P2.1) can be further divided into two subproblems with $C_{\text{p},n} \geq C_{\text{e},n}$ and $C_{\text{p},n} < C_{\text{e},n}$, respectively. 
In both cases, $C_{n}$ is the monotonical function of $\bm{\gamma}_{n}$ and $w_{n}$. 
Then, the updating rule of the given objective value is designed based on the monotonicity of $\tilde{\varkappa}(\breve{\mathbf{q}}_{n}, w_{n}, \bm{\gamma}_{n})$ w.r.t. $\bm{\gamma}_{n}$ given in the following proposition.\footnote{In algorithm 1-3, the subscript $n$ is omitted for notational simplicity without ambiguity.}

\begin{proposition}
    Given any feasible $\breve{\mathbf{q}}_{i},w_{i}$, $\tilde{\varkappa}(\breve{\mathbf{q}}_{i}, w_{i}, \bm{\gamma}_{n})$ is a monotonically nondecreasing function of $\bm{\gamma}_{n}$.
\end{proposition}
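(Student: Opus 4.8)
The plan is to split the statement into two scalar monotonicities and recombine them through the maximum. Recall that $\tilde{\varkappa}(\breve{\mathbf{q}}_i,w_i,\bm{\gamma}_n)=\max(\tilde{\zeta}_{\text{p},n},\tilde{\zeta}_{\text{e},n})-\varepsilon_{\text{out}}$, that $\tilde{\zeta}_{\text{p},n}$ depends on $\bm{\gamma}_n=[\breve{\gamma}_n,\hat{\gamma}_n]^{T}$ only through the first entry $\breve{\gamma}_n$, and that $\tilde{\zeta}_{\text{e},n}$ depends on it only through the second entry $\hat{\gamma}_n$. Hence it suffices to show that, for any fixed feasible $\breve{\mathbf{q}}_i$ and $w_i$, the map $\breve{\gamma}_n\mapsto\tilde{\zeta}_{\text{p},n}$ is nondecreasing and the map $\hat{\gamma}_n\mapsto\tilde{\zeta}_{\text{e},n}$ is nondecreasing: then both $\tilde{\zeta}_{\text{p},n}$ and $\tilde{\zeta}_{\text{e},n}$ are nondecreasing in $\bm{\gamma}_n$ with respect to the element-wise order $\prec$, the pointwise maximum of such functions is again nondecreasing, and subtracting the constant $\varepsilon_{\text{out}}$ preserves monotonicity.

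For the prediction stage, I would first observe that the law of $\acute{\mathbf{q}}_n$ is unaffected by $\bm{\gamma}_n$: by Proposition~1 it is zero-mean Gaussian with covariance $\bm{\breve{\Lambda}}_n$ assembled from entries of $\mathbf{M}_{\text{p},n}=\mathbf{G}\mathbf{M}_{n-1}\mathbf{G}^{H}+\mathbf{Q}_{\text{p}}$, which depends on none of $\breve{\mathbf{q}}_n$, $w_n$, or $\bm{\gamma}_n$. Integrating $\breve{\chi}(\cdot)$ from (\ref{fm:aCOP})--(\ref{fm:abOP-1}) against this density equals $\mathbb{P}(\acute{\mathbf{q}}_n\in\mathrm{aCOR})$, so $\tilde{\zeta}_{\text{p},n}=\mathbb{P}(\tilde{\xi}_{\text{p},n}(\acute{\mathbf{q}}_n;\breve{\mathbf{q}}_i,\breve{\gamma}_n)\ge 0)$, where in (\ref{fm:cout-ineq-aprx2}) the event $\tilde{\xi}_{\text{p},n}\ge 0$ encodes the outage event and the ellipse boundary $\{\tilde{\xi}_{\text{p},n}=0\}$ is Lebesgue-null. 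It then suffices to prove that $\tilde{\xi}_{\text{p},n}(\acute{\mathbf{q}};\breve{\mathbf{q}}_i,\breve{\gamma}_n)$ is nondecreasing in $\breve{\gamma}_n$ at each fixed $\acute{\mathbf{q}}$, since this makes $\{\tilde{\xi}_{\text{p},n}\ge 0\}$ a nested family of events in $\breve{\gamma}_n$ whose probability is therefore nondecreasing. Differentiating the quadratic form of (\ref{fm:cout-ineq-aprx2}) via the coefficients (\ref{fm:txi20})--(\ref{fm:txi0}), every $\breve{\gamma}_n$-dependent contribution collapses into a perfect square,
\begin{equation}
\frac{\partial\tilde{\xi}_{\text{p},n}}{\partial\breve{\gamma}_n}=\frac{(\acute{x}_n+\breve{x}_i)^{2}+(\acute{y}_n+\breve{y}_i)^{2}+H^{2}}{\tilde{P}M}=\frac{x_n^{2}+y_n^{2}+H^{2}}{\tilde{P}M}>0,
\end{equation}
using $\acute{\mathbf{q}}_n=\mathbf{q}_n-\breve{\mathbf{q}}_n$ and $M=N_{\text{t}}\pi^{2}(N_{\text{t}}^{2}-1)/24>0$; this settles the prediction-stage claim.

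For the estimation stage the argument is identical once one writes $\tilde{\zeta}_{\text{e},n}=\mathbb{P}(\tilde{\xi}_{\text{ea},n}(\grave{\mathbf{q}}_n;\breve{\mathbf{q}}_i,\hat{\gamma}_n)\ge 0)$ with $\grave{\mathbf{q}}_n$ zero-mean Gaussian of covariance $\bm{\hat{\Lambda}}_n$ read off from $\mathbf{M}_n=(\mathbf{H}_n^{H}\mathbf{Q}_{\text{m},n}^{-1}\mathbf{H}_n+\mathbf{M}_{\text{p},n}^{-1})^{-1}$, which does depend on the now-fixed $\breve{\mathbf{q}}_i$ and $w_i$ but not on $\bm{\gamma}_n$. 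Because $\tilde{\xi}_{\text{ea},n}$ is $\tilde{\xi}_{\text{p},n}$ with $\breve{\gamma}_n$ replaced by $\hat{\gamma}_n$ and $\breve{\mathbf{q}}_n$ retained, the same differentiation yields $\partial\tilde{\xi}_{\text{ea},n}/\partial\hat{\gamma}_n=(x_n^{2}+y_n^{2}+H^{2})/(\tilde{P}M)>0$, so $\hat{\gamma}_n\mapsto\tilde{\zeta}_{\text{e},n}$ is nondecreasing, and combining this with the first paragraph finishes the proof.

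The step I expect to be the main obstacle is the bookkeeping just sketched: confirming that neither $\bm{\breve{\Lambda}}_n$ nor $\bm{\hat{\Lambda}}_n$ inherits any dependence on $\bm{\gamma}_n$, so that the entire $\breve{\gamma}_n$/$\hat{\gamma}_n$-dependence is funnelled through the quadratics $\tilde{\xi}_{\text{p},n}$ and $\tilde{\xi}_{\text{ea},n}$, together with fixing the sign convention so that $\tilde{\zeta}_{\text{p},n}$ is correctly identified with $\mathbb{P}(\tilde{\xi}_{\text{p},n}\ge 0)$ rather than its complement. Once those are settled, the perfect-square derivative identity is a one-line computation and the maximum-plus-constant-shift step is immediate.
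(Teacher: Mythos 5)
Your proof is correct, but it takes a genuinely different route from the paper's. The paper works directly on the closed-form integrand: it differentiates $\chi$ (the $\mathrm{erf}$-based expression) with respect to $\gamma$, upper-bounds that derivative by $A\rho(\gamma)$ with $A\geq 0$, and then establishes $\rho(\gamma)\leq 0$ through a case analysis on the sign of an auxiliary coefficient $\rho_{1}$ together with the boundary values $\rho(0)\leq 0$ and $\lim_{\gamma\to\infty}\rho(\gamma)=0$. You instead step back to the probabilistic definition underlying the formula: since $\tilde{\zeta}_{\text{p},n}$ (resp. $\tilde{\zeta}_{\text{e},n}$) is the Gaussian measure of $\{\tilde{\xi}_{\text{p},n}\geq 0\}$ (resp. $\{\tilde{\xi}_{\text{ea},n}\geq 0\}$) under a law whose covariance $\bm{\breve{\Lambda}}_{n}$ or $\bm{\hat{\Lambda}}_{n}$ is built from $\mathbf{M}_{\text{p},n}$ or $\mathbf{M}_{n}$ and hence carries no $\bm{\gamma}_{n}$-dependence, it suffices to show the defining quadratic is pointwise nondecreasing in the SNR, and your perfect-square identity $\partial\tilde{\xi}_{\text{p},n}/\partial\breve{\gamma}_{n}=(x_{n}^{2}+y_{n}^{2}+H^{2})/(\tilde{P}M)>0$ checks out against the coefficients (\ref{fm:txi20})--(\ref{fm:txi0}); this is no accident, since the $\breve{\gamma}_{n}$-dependent part of the left-hand side of (\ref{fm:cout-ineq-aprx}) is already quadratic in $\mathbf{q}_{n}$, so its second-order Taylor expansion is exact. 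The nested-region argument then gives monotonicity of the probability without any calculus on $\mathrm{erf}$, without differentiating through the $\gamma$-dependent integration limits $\acute{x}_{\text{L}},\acute{x}_{\text{U}}$, and without the sign case analysis; it also handles degenerate or empty approximated CORs automatically. What the paper's approach buys in exchange is that it operates directly on the quantity actually used in the algorithm (the closed-form $\tilde{\zeta}$), whereas your argument relies on re-identifying that formula with the Gaussian measure of $\{\tilde{\xi}\leq 0\}$ --- a fact established in Appendix A, so the reliance is legitimate, but it should be stated explicitly. Your final step (componentwise monotonicity of each $\tilde{\zeta}$ in its own SNR entry, preserved under pointwise maximum and a constant shift) matches the paper's implicit conclusion and is fine.
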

\begin{proof}
    Please refer to Appendix B. 
\end{proof}

Proposition 2 indicates that it is easier to identify feasible solutions to (P2) with smaller target constant received SNRs. 
Let us denote the feasible set for (P2) given $\bm{\gamma}$ by 
\vspace{-1.5mm}
\begin{equation}
    \mathcal{S}_{i}(\bm{\gamma}) = \{ (w_{n}, \breve{\mathbf{q}}_{n}) | \tilde{\varkappa}(\breve{\mathbf{q}}_{n}, w_{n}, \bm{\gamma}) < 0 \}, 
    \vspace{-1.5mm}
\end{equation}
respectively. 
Then, for different $\bm{\gamma}$ and $\bm{\gamma}^{'}$, $\mathcal{S}_{i}(\bm{\gamma}) \subseteq \mathcal{S}_{i}(\bm{\gamma}^{'})$ holds if $\bm{\gamma}_{i} \preceq \bm{\gamma}_{i}^{'}$ is satisfied due to 
\vspace{-1.5mm}
\begin{equation}
    \tilde{\varkappa}(\breve{\mathbf{q}}_{n}, w_{n}, \bm{\gamma}_{i}^{'}) \leq \tilde{\varkappa}(\breve{\mathbf{q}}_{n}, w_{n}, \bm{\gamma}_{i}) \leq 0, \forall \bm{\gamma}_{i} \in \mathcal{S}_{i}(\bm{\gamma}). 
    \vspace{-1.5mm}
\end{equation}
Thanks to the continuity of $\tilde{\varkappa}(\breve{\mathbf{q}}_{n}, w_{n}, \bm{\gamma}_{n})$ w.r.t $\breve{\mathbf{q}}_{n}$ in most parts of COR, the feasible set $\mathcal{S}_{i}(\bm{\gamma}^{'})$ probably contains elements not belonging to $\mathcal{S}_{i}(\bm{\gamma})$, which further indicates a higher chance of finding a feasible solution to (P2) in $\mathcal{S}_{i}(\bm{\gamma}^{'})$ than in $\mathcal{S}_{i}(\bm{\gamma})$. 
Inspired by this observation, a two-layer bisection search is applied to solve (P2.1) ensuring the convergence. 

To address the non-convex objective function in (P2.2), the subalgorithm for (P2.2) applies the SCA technique to efficiently obtain a locally optimal solution \cite{KTMeng-2023-TWC-UAV-IPSAC}. 
In the $m$-th iteration, (P2.2) is solved with the objective function replaced by a surrogate function based on the second-order Taylor expansion, given by 
\begin{align}
    &\tilde{\varkappa}_{\text{a}}(\breve{\mathbf{q}}_{m};\breve{\mathbf{q}}_{m-1}^{\text{E}},w_{j},\bm{\gamma}_{k}) 
    = \tilde{\varkappa}(\breve{\mathbf{q}}_{m-1}^{\text{E}};w_{j},\bm{\gamma}_{k}) \notag \\
    &+ \nabla \tilde{\varkappa}(\breve{\mathbf{q}}_{m-1}^{\text{E}};w_{j},\bm{\gamma}_{k})^{T} (\breve{\mathbf{q}}_{m} - \breve{\mathbf{q}}_{m-1}^{\text{E}}) \notag \\
    &+ Q\| \breve{\mathbf{q}}_{m} - \breve{\mathbf{q}}_{m-1}^{\text{E}} \|^{2}, \label{fm:sca}
\end{align}
where $\breve{\mathbf{q}}_{m-1}^{\text{E}}$ denotes the Taylor expansion point in the $(m-1)$-th iteration and $Q$ is a given positive real number ensuring the convexity of (\ref{fm:sca}). 
Problem (P2.2) with the objective function replaced by (\ref{fm:sca}) is a convex optimization problem and can be optimally solved by standard numerical convex programming solvers such as CVX tools \cite{cvxr}. 
The overall algorithm iteratively solving (P2.1) and (P2.2) is summarized in Algorithm 2.

\vspace{-3mm}
\subsection{AO-Based Algorithm}
The computational overhead of our proposed search-based algorithm mainly exists in the iteration number $k$ owing to the complicated function $\tilde{\varkappa}(\cdot)$ and trials of infeasible $(w_{j},\bm{\gamma}_{k})$.
To significantly reduce these redundant computations, a second algorithm for (P2) is proposed based on the AO method, where the obtained $w_{n}$ and $\bm{\gamma}_{n}$ are always feasible to (P2) in each iteration. 
To be specific, given a feasible predicted UAV trajectory $\breve{\mathbf{q}}_{n} = \breve{\mathbf{q}}_{i}$ in the $i$-th iteration of our AO-based algorithm, (P2) is simplified as a subproblem formulated as
\vspace{-1.5mm}
\begin{align}
    (\mathrm{P3.1}): \ &\max_{ \{ w_{n}, \bm{\gamma}_{n} \} } \ \ C_{n}  \label{opt31-obj} \\
    \text{s.t.} \ 
    &\text{(\ref{opt-cstrt-c}), (\ref{opt-cstrt-e})}, \notag \\
    &\tilde{\varkappa}(\breve{\mathbf{q}}_{i}, w_{n}, \bm{\gamma}_{n}) \leq 0. \tag{\ref{opt31-obj}{a}} \label{opt31-cstrt-a}
    \vspace{-1.5mm}
\end{align}
To handle the non-convex objective function and constraint (\ref{opt31-cstrt-a}) in (P3.1), $w_{n}$ can be heuristically searched by a one-dimensional search, such as the golden section search \cite{goldsec}. 
As such, the subproblem of (P3.1) given the searched $w_{n}$ is a convex optimization problem thanks to Proposition 2, and thus can be optimally solved by the bisection search \cite{CVX}. 
Then, as summarized in Algorithm 3, given the obtained solution to (P3.1) denoted by $(w_{i}^{*},\bm{\gamma}_{i}^{*})$ in the $i$-th iteration, our proposed algorithm solves (P2.2) and updates $\breve{\mathbf{q}}_{i+1}$ by the obtained solution to (P2.2).

\begin{algorithm}[!t]
	\caption{Overall algorithm for (P2.1) and (P2.2).}
    {
    \begin{algorithmic}[1] 
		\STATE {Initialize $C_{i}$, $l_{\text{C}}$,  $j=1$, the error tolerance $\delta_{\text{w}}$, $\delta_{\text{p}}$, the searching range $[w_{j}^{\text{L}},w_{j}^{\text{U}}] = [w_{\text{min}}, w_{\text{max}}]$, and feasibility indicator $l_{\text{f}} = 0$.} 
        \WHILE{$| w_{j}^{\text{U}} - w_{j}^{\text{L}} | > \delta_{\text{w}}$ and $l_{\text{f}} = 0$}
            \STATE {Obtain $w_{j} = (w_{j}^{\text{L}} + w_{j}^{\text{U}})/2$ and set $k=1$.} 
            \STATE {Set the searching range $[C_{\text{p},k}^{\text{L}},C_{\text{p},k}^{\text{U}}] = [C_{i}, C_{\text{max}}]$ for the case with $l_{\text{C}} = 0$, or $[C_{\text{p},k}^{\text{L}},C_{\text{p},k}^{\text{U}}] = [0, C_{i}]$ otherwise.}
            \WHILE{$| C_{\text{p},k}^{\text{U}} - C_{\text{p},k}^{\text{L}} | > \delta_{\text{p}}$ and $l_{\text{f}} = 0$}
                \STATE {Obtain $C_{\text{p},k} = (C_{\text{p},k}^{\text{L}} + C_{\text{p},k}^{\text{U}})/2$ and $C_{\text{e},k}$ from (\ref{opt21-cstrt-a}).} 
                \STATE {Obtain $\bm{\gamma}_{k} = [\breve{\gamma}_{k}, \hat{\gamma}_{k}]^{T} = [2^{C_{\text{p},k}} - 1, 2^{C_{\text{e},k}} - 1]^{T}$.}
                \STATE {Solve (P2.2) by SCA with $(w_{j}, \bm{\gamma}_{k})$ to obtain $\breve{\mathbf{q}}_{k}$.}
                \IF{$\tilde{\varkappa}(\breve{\mathbf{q}}_{k}, w_{j}, \bm{\gamma}_{k}) < 0$}
                    \STATE {Set $l_{\text{f}} = 1$ and $(\breve{\mathbf{q}}_{i}^{*}, w_{i}^{*}, \bm{\gamma}_{i}^{*}) = (\breve{\mathbf{q}}_{k}, w_{j}, \bm{\gamma}_{k})$.}
                \ELSE 
                    \STATE {Update $[C_{\text{p},k+1}^{\text{L}}, C_{\text{p},k+1}^{\text{U}}] = [C_{\text{p},k}^{\text{L}}, C_{\text{p},k}]$ for the case with $\tilde{\zeta}_{\text{p},n}(\breve{\mathbf{q}}_{k}, w_{j}, \breve{\gamma}_{k}) \geq 0$, or $[C_{\text{p},k+1}^{\text{L}}, C_{\text{p},k+1}^{\text{U}}] = [C_{\text{p},k}, C_{\text{p},k}^{\text{U}}]$ otherwise.}
                \ENDIF
                \STATE {Update $k = k + 1$.}
            \ENDWHILE
            \IF{$l_{\text{f}} = 0$}
                \STATE {Update $[w_{j+1}^{\text{L}}, w_{j+1}^{\text{U}}] = [w_{j}, w_{j+1}^{\text{U}}]$ for the case with $l_{\text{C}} = 0$, or $[w_{j+1}^{\text{L}}, w_{j+1}^{\text{U}}] = [w_{j}^{\text{L}}, w_{j}]$ otherwise.}
            \ENDIF
            \STATE {Update $j = j + 1$.}
        \ENDWHILE
        \STATE {Output $l_{\text{f}}$ and additionally $(\breve{\mathbf{q}}_{i}^{*}, w_{i}^{*}, \bm{\gamma}_{i}^{*})$ if $l_{\text{f}} = 1$ holds.}
	\end{algorithmic}}
\end{algorithm}
\vspace{-3mm}

\subsection{Convergence and Computational Complexity Analysis}

The computational complexities of our proposed search-based and AO-based algorithm can be analyzed as follows.  
Specifically, the number of iterations needed for the convergence of the bisection search for $w_{n}$ and $C_{\text{p},n}$ can be given by $I_{\text{w}} = \log_{2}\left( \lfloor (w_{\text{max}} - w_{\text{min}})/\epsilon_{\text{w}} \rfloor \right)$ and $I_{\text{C}} = \log_{2}\left( \lfloor \log_{2}{(1+\gamma_{\text{max}})}/\epsilon_{\text{C}} \rfloor \right)$, respectively, where $\epsilon_{\text{w}}$ and $\epsilon_{\text{C}}$ denotes the tolerance of the bisection search for $w_{n}$ and $C_{\text{p},n}$, respectively \cite{CVX}. 
Thus, the computational complexity of our proposed search-based algorithm can be given by $\mathcal{O}(2I_{\text{w}}I_{\text{C}}^{2}J_{\text{A}})$, where $J_{\text{A}}$ represents the number of iterations needed for the convergence of the SCA to solve (P2.2). 
In comparison, the computational complexity of our proposed AO-based algorithm can be given by $\mathcal{O}(I_{\text{w}}^{'}I_{\text{C}} + J_{\text{A}})$, where $I_{\text{w}}^{'}$ denotes the number of iterations needed for the convergence of the one-dimensional search for $w_{n}$. 
Assuming $I_{\text{w}} \approx I_{\text{w}}^{'}$, the computational complexity of our proposed AO-based algorithm is generally lower than that of our search-based algorithm. 
Nevertheless, the convergence of the search-based algorithm is guaranteed thanks to the guaranteed convergence of the bisection search while the convergence of the AO-based algorithm is not guaranteed owing to the heuristic search for $\breve{\mathbf{q}}_{n}$. 
To ensure practical applicability, a maximum number of iterations can be predetermined to force the termination of AO-based algorithm. 

\begin{algorithm}[!t]
    \caption{Proposed AO-based algorithm for (P2).}
	{\begin{algorithmic}[1] 
        \STATE {Initialize the maximum iteration number $I_{\text{max}}$ and a solution $\breve{\mathbf{q}}_{0}^{*}$ feasible to (P2). Set the iteration number $i=1$.} 
        \WHILE {$i \leq I_{\text{max}}$}
            \STATE {Solve (P3.1) given $\breve{\mathbf{q}}_{i-1}^{*}$ by searching $w_{n}$ via the golden section in the outer layer and searching $\bm{\gamma}_{i}$ via the bisection search in the inner layer. Obtain the solution $(w_{i}^{*},\bm{\gamma}_{i}^{*})$.}
            \STATE {Solve (P2.2) given $(w_{i}^{*},\bm{\gamma}_{i}^{*})$ by SCA and obtain the solution $\breve{\mathbf{q}}_{i}^{*}$.}
            \STATE {Update $i = i + 1$.}
        \ENDWHILE
	\end{algorithmic}}
\end{algorithm}

\section{Simulation Results}

In this section, numerical results are provided to verify the effectiveness of proposed OP approximations and algorithms. 
Unless specified otherwise, the following system parameters are used: $P_{\text{A}} = 0.1 \text{W}$, $\sigma_{\text{RCS}} = 0.2 \text{m}^{2}$, $\lambda = 0.01 \text{m}$, $\sigma^{2} = -80 \text{dBm}$, $H = 50 \text{m}$, $\Delta T = 0.02 \text{s}$, $v_{\text{A,max}} = 30 \text{m/s}$, $\tilde{q} = 10^{-3}$, $N_{\text{sym}} = 10^{4}$, $N = 10^{3}$, $w_{\text{min}} = 0.1$, and $w_{\text{max}} = 1$ \cite{KTM-IOS,Relia2,YFJ2024CL}.

\begin{figure*}[!t]
	\centering
	\vspace{-4mm}
    \subfigure[OPs at the prediction stage.]{
        \vspace{-1mm}
        \label{smgA:a}
		\includegraphics[width=0.3\textwidth]{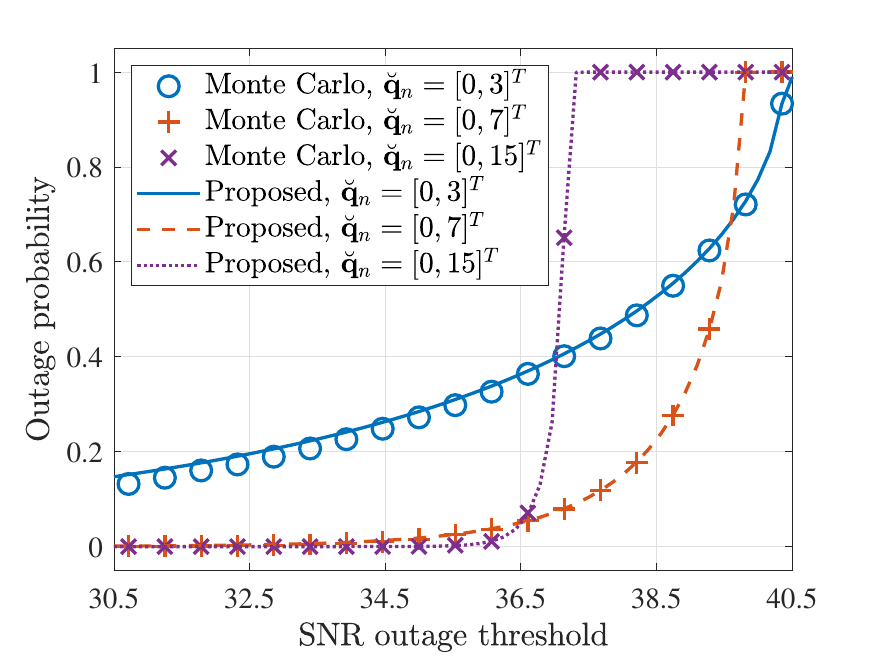}
	} 
    \subfigure[OPs at the estimation stage.]{
		\vspace{-1mm}
        \label{smgA:b}
		\includegraphics[width=0.3\textwidth]{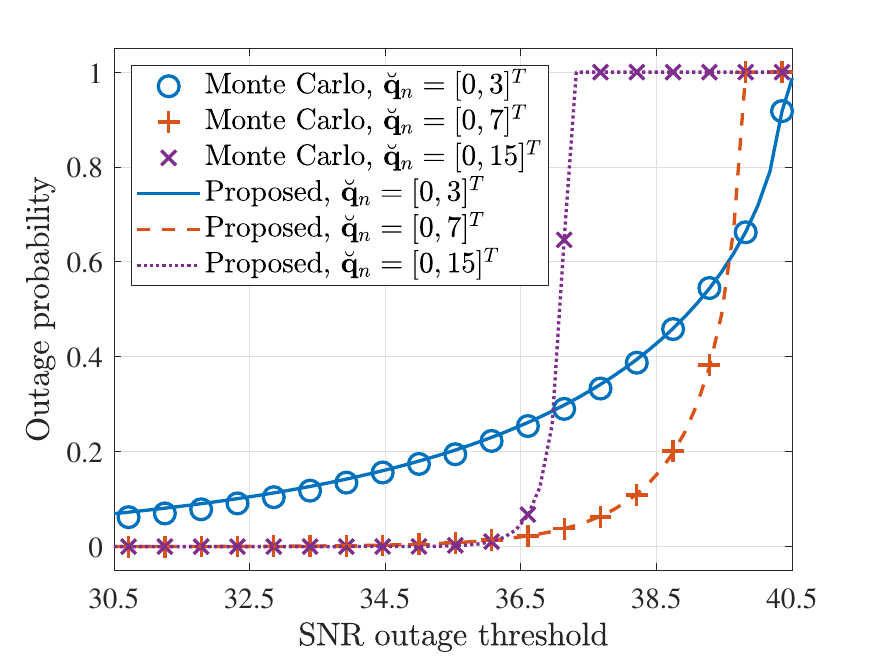}
    } 
    \subfigure[CORs at the prediction stage.]{
		\vspace{-1mm}
        \label{smgA:c}
		\includegraphics[width=0.3\textwidth]{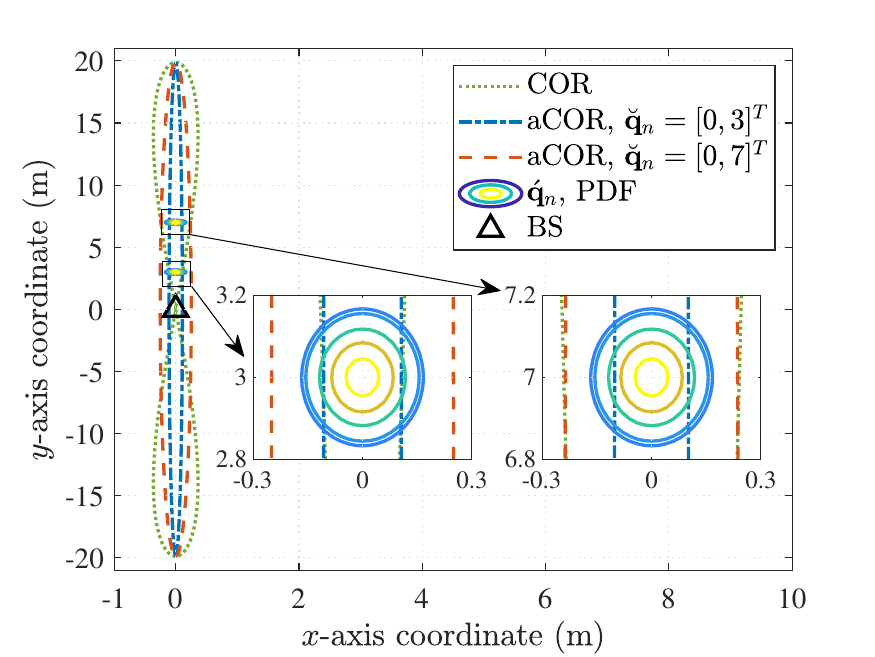}
    } 
     \label{smgA}
    \caption {Accuracies of proposed approximated OPs and CORs under different $\breve{\mathbf{q}}_{n}$. }
    \vspace{-3mm}
\end{figure*}

\begin{figure*}[!t]
	\centering
    \subfigure[OP versus $\breve{\mathbf{q}}_{n}$ with $N_{\text{t}} = 32$.]{
        \vspace{-1mm}
        \label{smgA:opq-32}
		\includegraphics[width = 0.8\columnwidth]{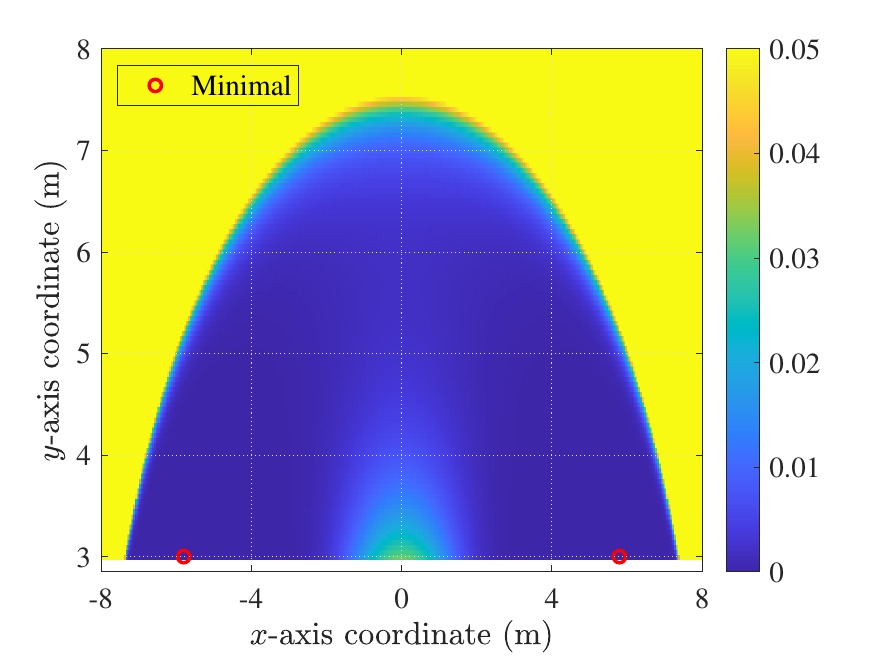}
	} 
    \subfigure[OP versus $\breve{\mathbf{q}}_{n}$ with $N_{\text{t}} = 64$.]{
		\vspace{-1mm}
        \label{smgA:opq-64}
		\includegraphics[width = 0.8\columnwidth]{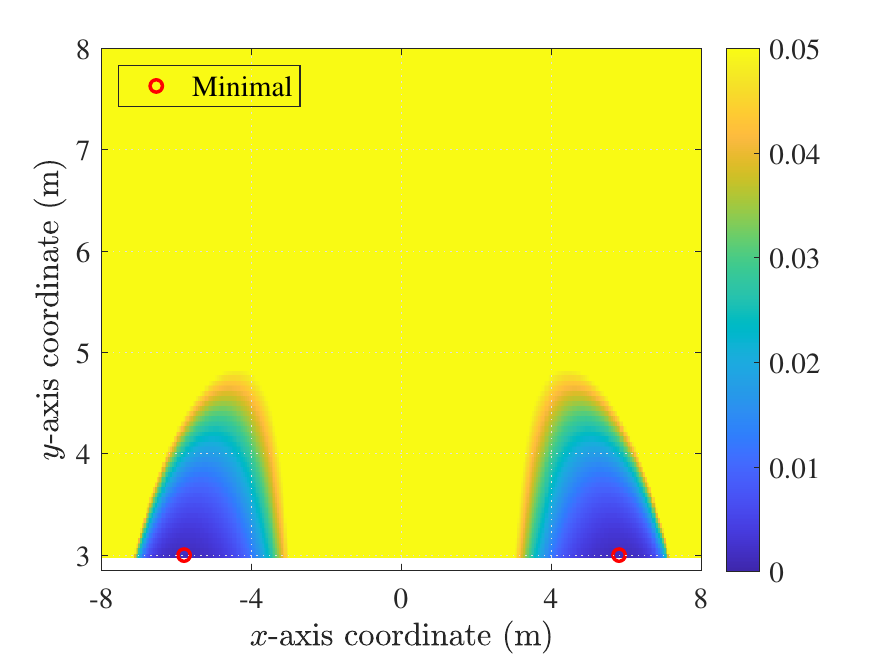}
    } 
    \subfigure[CORs with $N_{\text{t}} = 32$.]{
		\vspace{-1mm}
        \label{smgA:cor-32}
		\includegraphics[width = 0.8\columnwidth]{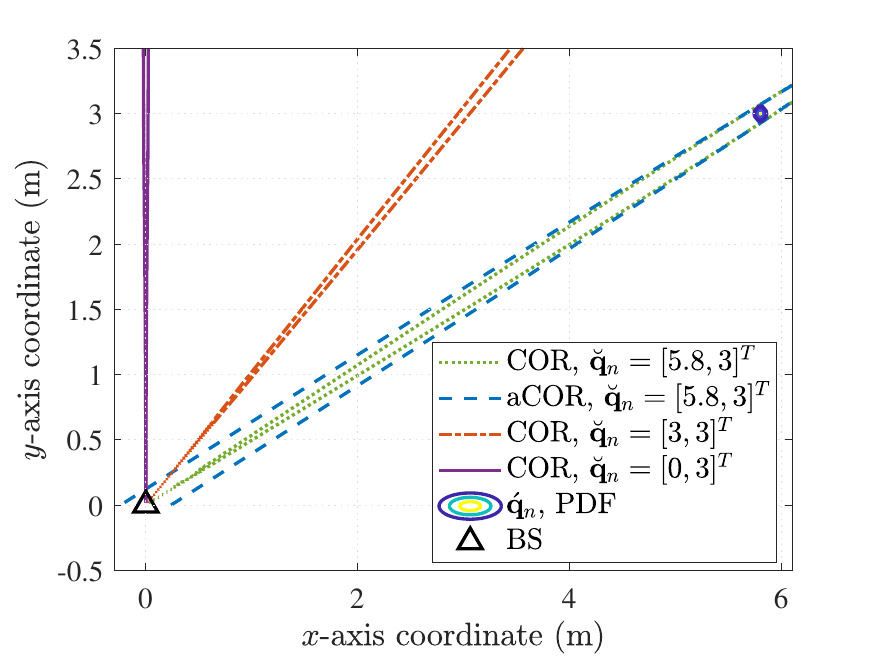}
    } 
    \subfigure[CORs with $N_{\text{t}} = 64$.]{
		\vspace{-1mm}
        \label{smgA:cor-64}
		\includegraphics[width = 0.8\columnwidth]{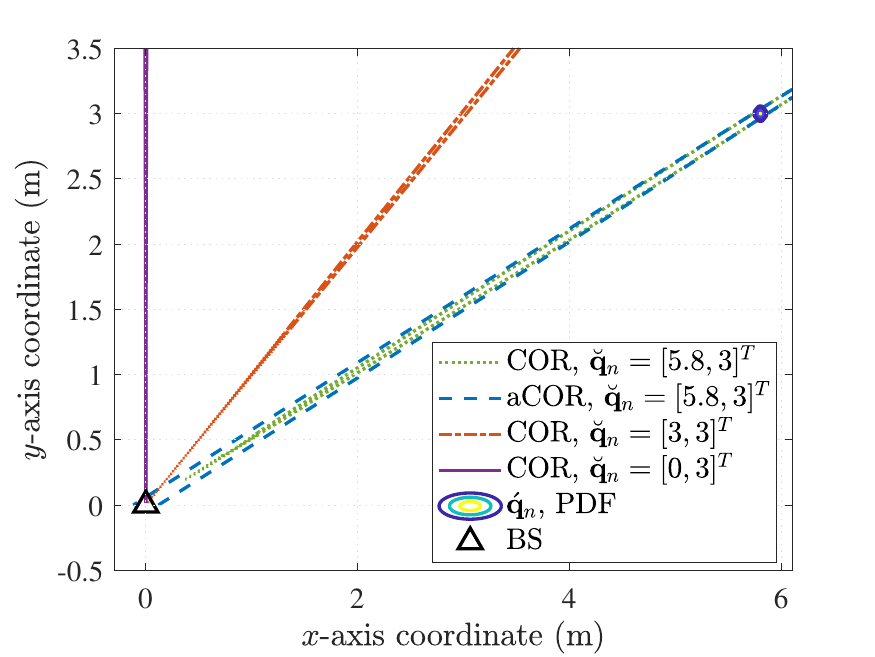}
    } 
     \label{smgA3}
    \caption {OPs and CORs with different $\breve{\mathbf{q}}_{n}$ and $N_{\text{t}}$. }
    \vspace{-3mm}
\end{figure*}

\subsection{Proposed OP Approximations}

Fig. \ref{smgA:a} and Fig. \ref{smgA:b} illustrate the accuracies of our proposed approximated OPs at the prediction and estimation stage and compare their differences under three representative predicted UAV positions.
Specifically, the Monte Carlo results in Fig. \ref{smgA:a} and Fig. \ref{smgA:b} are obtained by simulating the OP results with random noises (including the initial noise, process noise and measurement noise) in one time slot. 
The number of Monte Carlo simulation runs is set to $10^{4}$, and other specific system parameters are given by $a_{1} = a_{2} = 0.1$, $N_{\text{t}} = N_{\text{r}} = 16$, $N = 1$, $w_{n} = 0.5$ and $\mathbf{M}_{0} = 10^{-2}\mathbf{I}$ \cite{Relia2}, respectively. 
It can be observed that our proposed OP approximations closely match the Monte Carlo results in the cases with $\breve{\mathbf{q}}_{n} = [0, 7]^{T}$ and $\breve{\mathbf{q}}_{n} = [0, 15]^{T}$, thus validating the proposed approximation accuracy and effectiveness.\footnote{Unless specified otherwise, the OP refers to our proposed approximated OP in following paragraphs for brevity given the verified accuracy.} 
However, our proposed OP approximations are less accurate in the case with $\breve{\mathbf{q}}_{n} = [0, 3]^{T}$, especially at the prediction stage, which indicates that the proposed approximation accuracy is conditional on the UAV position. 
To explain such property, Fig. \ref{smgA:c} demonstrates the relationships among the dominant part of $\acute{\mathbf{q}}_{n}$ PDF, COR and aCOR in the cases with $\breve{\mathbf{q}}_{n} = [0, 3]^{T}$ and $\breve{\mathbf{q}}_{n} = [0, 7]^{T}$, respectively, given the target constant received SNR $\breve{\gamma}_{n} = 35$. 
Note that the COR with $\breve{\mathbf{q}}_{n} = [0, 3]^{T}$ is the same as that with $\breve{\mathbf{q}}_{n} = [0, 7]^{T}$, since (\ref{fm:cout-ineq}) is irrelevant to $\breve{x}_{n}$ and $\breve{y}_{n}$ given $\breve{x}_{n} = 0$.
In the scenario with $\breve{\mathbf{q}}_{n} = [0, 7]^{T}$, despite the seemingly considerable difference between the COR and aCOR, our proposed approximation is still accurate because both the COR and aCOR contain the dominant part of $\acute{\mathbf{q}}_{n}$ PDF, which verifies a condition for our proposed approximation being accurate: \emph{the prediction/estimation error must be sufficiently small such that the difference between COR and aCOR can have negligible impacts on the integral of the highly concentrated $\acute{\mathbf{q}}_{n}$ PDF}. 
In contrast, in the case with $\breve{\mathbf{q}}_{n} = [0, 3]^{T}$, both the COR and aCOR intersect with the dominant part of $\acute{\mathbf{q}}_{n}$ PDF, and thereby the difference between the COR and aCOR causes non-negligible approximation accuracy loss. 
Furthermore, although it is intractable to analytically characterize the relationship between the proposed approximation accuracy and the UAV position, (\ref{opt-cstrt-b}) is considered in this paper as a conservative but efficient constraint on UAV trajectories to avoid the low OP approximation accuracy, such as the case with $\breve{\mathbf{q}}_{n} = [0, 3]^{T}$. 

\begin{figure*}[!t]
	\centering
	\vspace{-4mm}
    \subfigure[Convergence behaviour of proposed algorithms.]{
        \vspace{-1mm}
        \label{smgB:cvg}
		\includegraphics[width = 0.8\columnwidth]{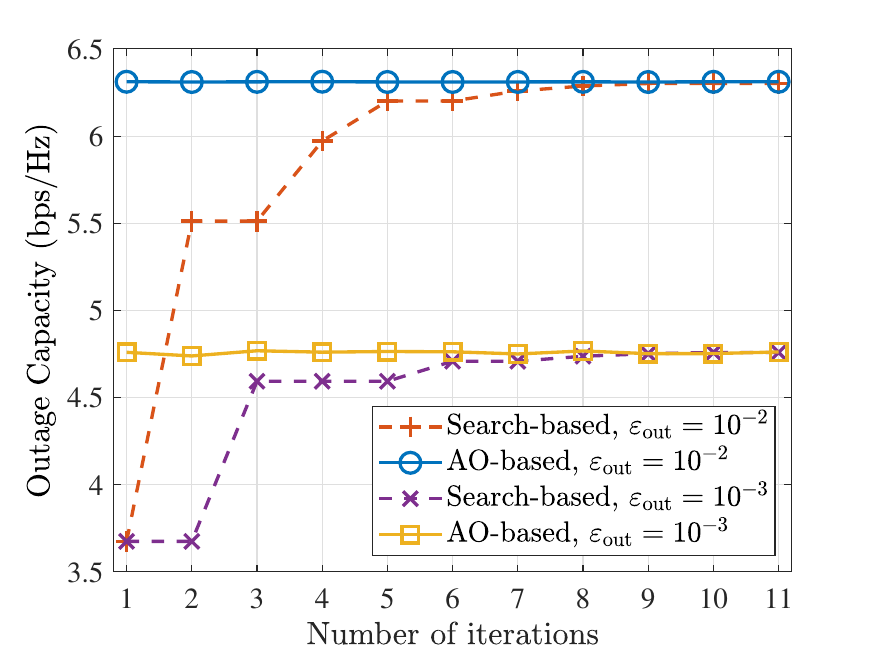}
	} 
    \subfigure[Outage capacity versus sensing duration ratio.]{
		\vspace{-1mm}
        \label{smgB:cw}
		\includegraphics[width = 0.8\columnwidth]{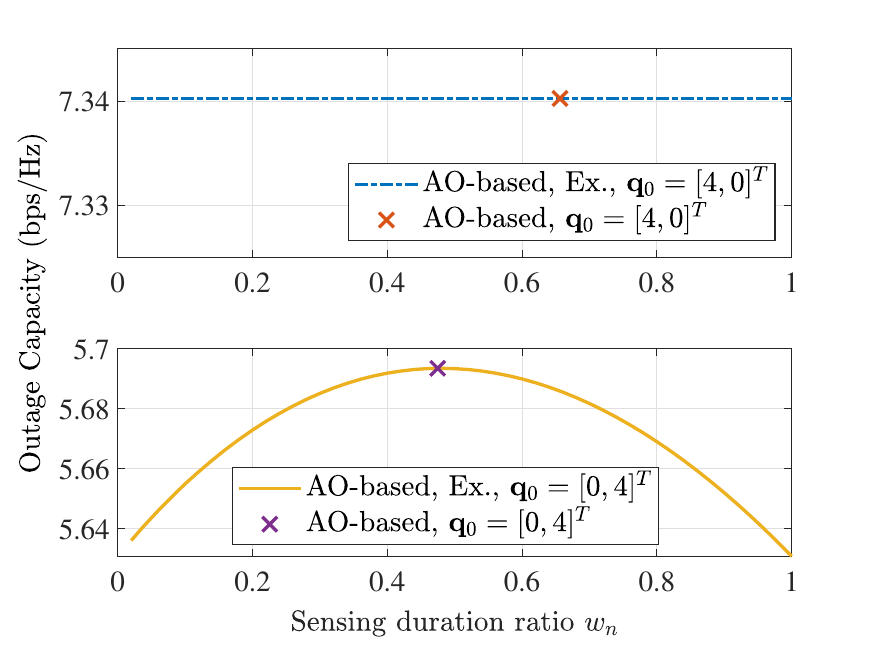}
    } 
    \caption {Performance of proposed algorithms.}
    \vspace{-3mm}
\end{figure*}

To obtain important insights into the relationship between the UAV trajectory and OPs, Fig. \ref{smgA:opq-32} and Fig. \ref{smgA:opq-64} illustrate the OP at the prediction stage within a given range of $\breve{\mathbf{q}}_{n}$ with $N_{\text{t}} = 32$ and $N_{\text{t}} = 64$, respectively.\footnote{The results about the approximated OP at the estimation stage is similar to those at the prediction stage, thus not presented for brevity.} 
A typically high target constant received SNR is set as $\breve{\gamma}_{n} = 0.975 \gamma_{\text{max}}$ for both cases, and other system parameters are specified as: $\mathbf{M}_{0} = 10^{-4}\mathbf{I}$, $N = 1$, $y_{\text{min}} = 3$ m and $a_{1} = a_{2} = 0.1$. 
As shown in Fig. \ref{smgA:opq-32} and Fig. \ref{smgA:opq-64}, the optimal predicted UAV trajectories resulting in the minimum OP exist at the line of $\breve{y}_{n} = y_{\text{min}}$, i.e., the minimum distance from the BS, with a certain $x$-axis coordinate given by $\pm 5.8$ m in both cases. 
Around the optimal predicted UAV trajectories, there exist certain regions where the OP is relatively low. 
Compared to the case with $N_{\text{t}} = 32$, the low-OP region with $N_{\text{t}} = 64$ becomes smaller and more concentrated at the optimal predicted UAV trajectories. 
Also, the positions near the direction $\breve{\theta}_{n} = 0^{\circ}$ are not contained in the low-OP region with $N_{\text{t}} = 64$. 
To explain such results, Fig. \ref{smgA:cor-32} and Fig. \ref{smgA:cor-64} show the accurate CORs with different predicted UAV trajectories $\breve{\mathbf{q}}_{n}$ corresponding to the cases in Fig. \ref{smgA:opq-32} and Fig. \ref{smgA:opq-64}, respectively. 
In both cases, the COR width increases when the predicted UAV trajectory $\breve{\mathbf{q}}_{n}$ varies from the direction $\breve{\theta}_{n} = 90^{\circ}$ to $\breve{\theta}_{n} = 0^{\circ}$, which is the main reason why the optimal predicted UAV trajectory $\breve{\mathbf{q}}_{n}^{*}$ is located at the line of $\breve{y}_{n} = y_{\text{min}}$. 
However, the UAV should be sufficiently close to the BS due to the potentially severe path loss, and the requirement of letting its dominant part of $\acute{\mathbf{q}}_{n}$ or $\grave{\mathbf{q}}_{n}$ PDF be contained in the COR. 
Therefore, \emph{the predicted UAV trajectory $\breve{\mathbf{q}}_{n}$ achieves a trade-off between minimizing the path loss and being covered by the mainlobe beam for minimizing the OP.}
Moreover, as illustrated in Fig. \ref{smgA:opq-32} and Fig. \ref{smgA:opq-64}, the smaller low-OP region with $N_{\text{t}} = 64$ is due to the narrower beam pattern generated by the larger transmit antenna number.

\subsection{Proposed Algorithms}

Fig. \ref{smgB:cvg} shows the convergence behaviour of our proposed search-based algorithm and AO-based algorithm in cases with $\varepsilon_{\text{out}} = 10^{-2}$ and $\varepsilon_{\text{out}} = 10^{-3}$. 
The initial state of the UAV is given by $\mathbf{x}_{0} = [0, 0, 4, 0]^{T}$ and the initial estimated state variables are represented by $\hat{\mathbf{x}}_{0} = \mathbf{x}_{0} + \mathbf{z}_{0}$ with $\mathbf{z}_{0} = [0.083, -0.001, 0.037, 0.042]^{T}$.
The other system parameters are given by: $a_{1} = a_{2} = 0.7$, $N_{\text{t}} = N_{\text{r}} = 64$, $N = 1$, and $\mathbf{M}_{0} = 10^{-3}\mathbf{I}$.
As shown in Fig. \ref{smgB:cvg}, the convergence of our proposed search-based algorithm is verified and our proposed AO-based algorithm also exhibits satisfactory convergence performance in both cases. 
Particularly, despite the slight fluctuation of the maximized outage capacity owing to the heuristic update of $\breve{\mathbf{q}}_{n}$, the output of our proposed AO-based algorithm approaches the maximum outage capacity much faster than the search-based algorithm, which demonstrates its effectiveness and considerably reduced computational complexity.  
Besides, compared to the case with $\varepsilon_{\text{out}} = 10^{-2}$, the maximum outage capacity significantly decreases and its fluctuation under the AO-based algorithm is more obvious in the case with $\varepsilon_{\text{out}} = 10^{-3}$, indicating the difficulty of maintaining a large outage capacity with a stringent OP tolerance threshold $\varepsilon_{\text{out}}$.

Fig. \ref{smgB:cw} demonstrates the varying trends of the outage capacity w.r.t. the sensing duration ratio $w_{n}$ with different UAV positions. 
The initial UAV states in the two cases are given by $\mathbf{x}_{0} = [4, 0, 0, 0]^{T}$ and $\mathbf{x}_{0} = [0, 0, 4, 0]^{T}$, respectively, to emphasize the different varying trends of outage capacity w.r.t. $w_{n}$. 
The other system parameters are as those in Fig. \ref{smgB:cvg} except the OP threshold given by $\varepsilon_{\text{out}} = 10^{-2}$. 
It can be observed that the impact of $w_{n}$ on the outage capacity with $\mathbf{q}_{0} = [0,4]^{T}$ is much larger than that with $\mathbf{q}_{0} = [4,0]^{T}$. 
This is because, when the UAV is at $[0,4]^{T}$, the state measurement provides a highly accurate estimation of the UAV trajectory and thus $C_{\text{e},n}$ can be quite larger than $C_{\text{p},n}$. 
Under such circumstances, the sensing duration ratio $w_{n}$ achieves \textit{a fundamental trade-off between sensing and sensing-assisted communication}: when $w_{n}$ is too small, the matched-filtering gain is insufficient to obtain highly accurate sensing results and thus cannot significantly enhance the communication efficiency or reliability; however, when $w_{n}$ is exceedingly large, the duration of enjoying the highly accurate beam alignment from sensing becomes limited, which also leads to sub-optimal communication performance.
In contrast to the case with $\mathbf{q}_{0} = [0,4]^{T}$, the sensing gain is negligible when the UAV is at $[4,0]^{T}$ due to the almost infinite measurement noise variance of the azimuth angle, resulting in the minor effect of sensing duration ratio $w_{n}$ on the outage capacity. 
Therefore, when the UAV trajectory is infavorable to sensing, incorporating the measured results contributes little to the outage capacity enhancement and thus the overhead for real-time state measurement can be saved. 
Moreover, Fig. \ref{smgB:cw} verifies that a near-optimal solution can be obtained by the subalgorithm for solving (P3.1) of the AO-based algorithm.

\begin{figure*}[!t]
	\centering
    \subfigure[UAV trajectories in the PMD case.]{
        \vspace{-1mm}
        \label{smgC:xy-p}
		\includegraphics[width=0.8\columnwidth]{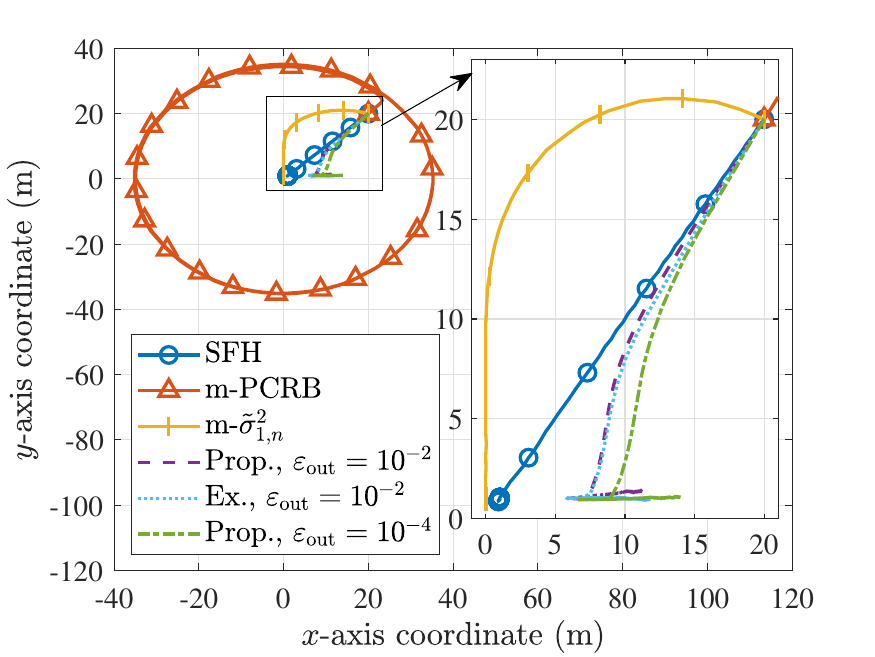}
	} 
    \subfigure[UAV trajectories in the PMnD case.]{
        \vspace{-1mm}
        \label{smgC:xy-m}
		\includegraphics[width=0.8\columnwidth]{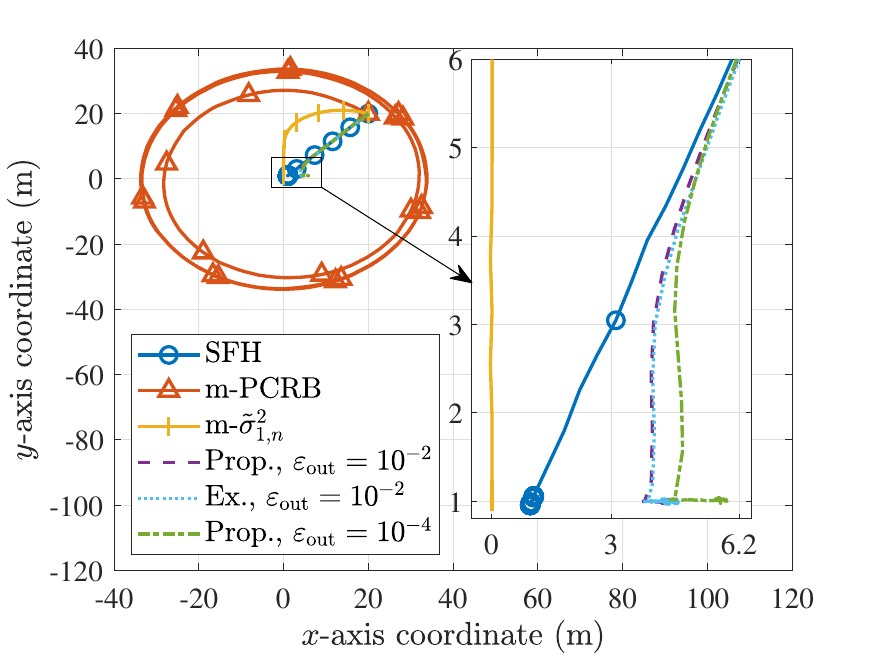}
	}  
    \caption {UAV trajectory comparisons between the PMD and PMnD case. }
    \vspace{-2mm}
\end{figure*}

\begin{figure*}[!t]
	\centering
    \subfigure[Outage capacities in the PMD case.]{
        \vspace{-1mm}
        \label{smgC:cn-p}
		\includegraphics[width=0.3\textwidth]{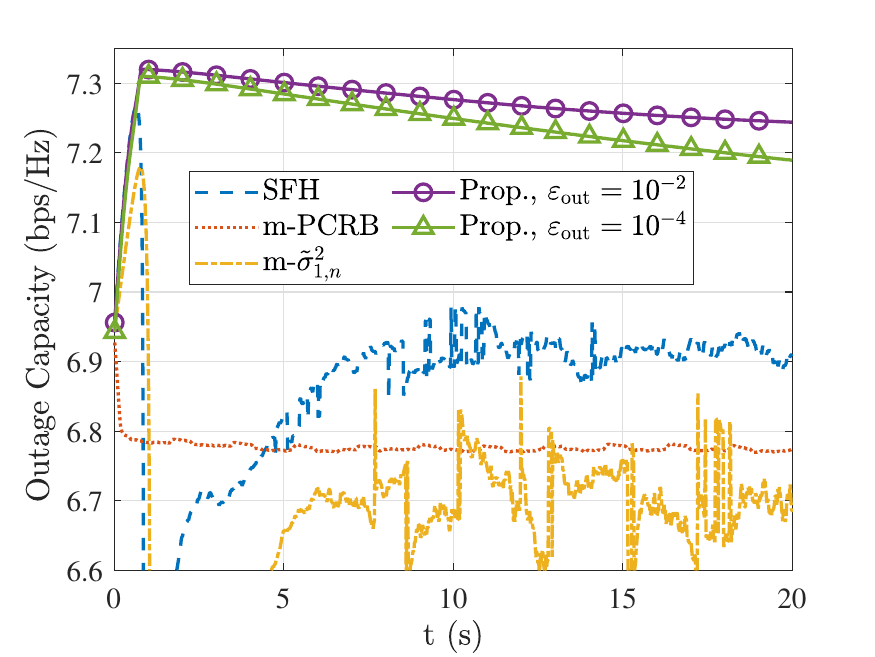}
	} 
    \subfigure[Outage capacities in the PMnD case.]{
        \vspace{-1mm}
        \label{smgC:cn-m}
		\includegraphics[width=0.3\textwidth]{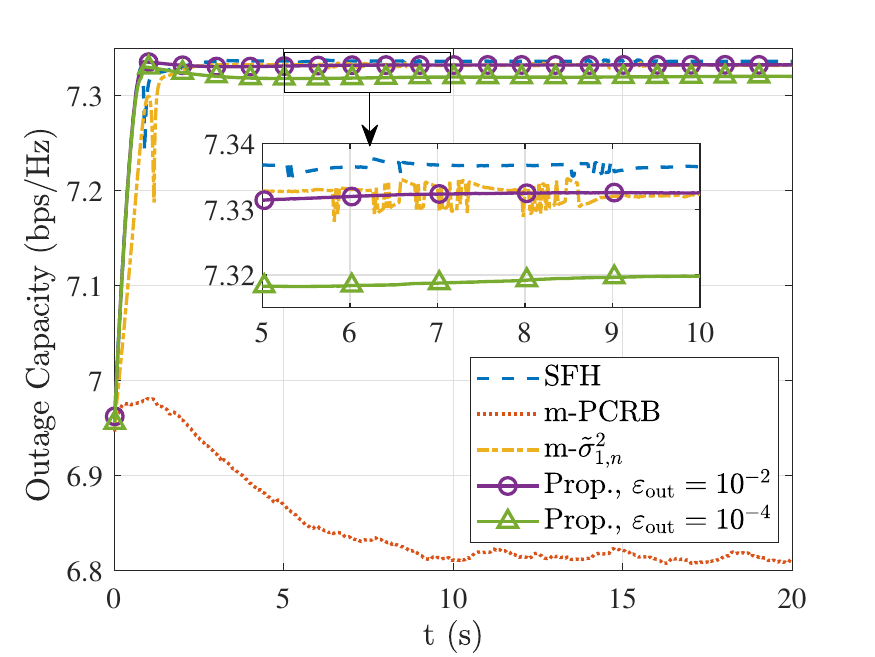}
	} 
    \subfigure[Sum of root PCRBs in the PMnD case.]{
        \vspace{-1mm}
        \label{smgC:s-m}
		\includegraphics[width=0.3\textwidth]{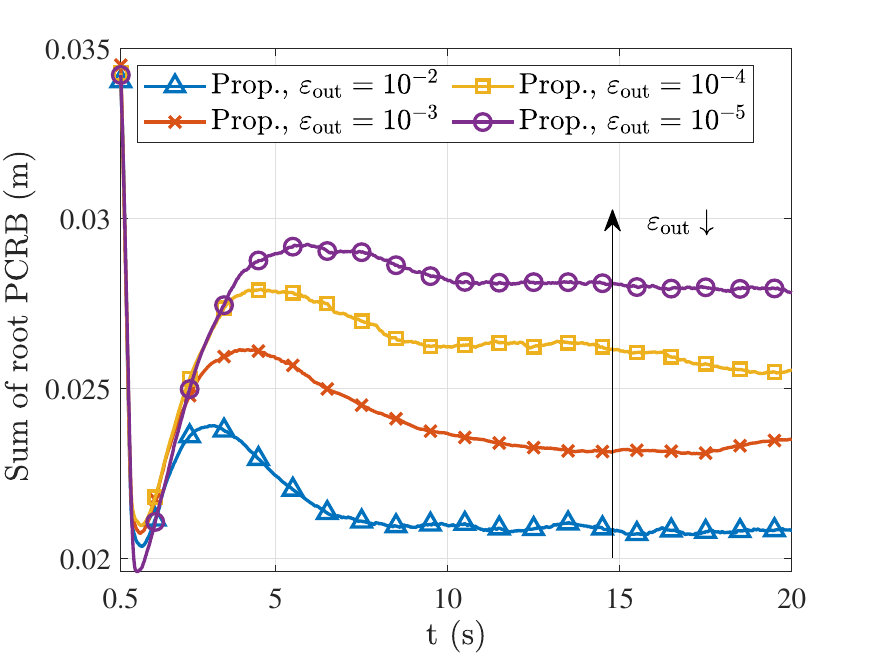}
	} 
    \caption {System performance comparsions among different schemes between the PMD and PMnD case.}
    \vspace{-3mm}
\end{figure*}

\vspace{-0.5mm}
\subsection{UAV Trajectories and System Performance} 
In this subsection, the results of our proposed UAV trajectory optimization scheme are compared with those of benchmarks in the prediction MSE-dominant (PMD) and prediction MSE-nondominant (PMnD) case, respectively. 
In the PMD case, the prediction MSE is so much smaller than the measurement MSE that $\mathbf{K}_{n} \approx \mathbf{0}$ holds \cite{welch1995}, which leads to the estimation MSE donimated by the prediction MSE, i.e., $\mathbf{M}_{n} \approx \mathbf{M}_{\text{p},n}$, due to (\ref{fm:efk-6}). 
Comparatively, the PMnD scenario refers to the case where the prediction MSE is not sufficiently smaller than the measurement MSE to satisfy $\mathbf{K}_{n} \approx \mathbf{0}$, indicating that the measurement MSE is small and the measured results are useful for decreasing the estimation MSE. 
In both cases, our proposed UAV trajectory design is compared with the following benchmarks:
\begin{itemize}
    \item Straight flight and hover (SFH): The UAV directly flies towards a specific position denoted by $\mathbf{q}_{\text{F}}$ with its maximum velocity $v_{\text{A,max}}$ and then hovers at $\mathbf{q}_{\text{F}}$ \cite{KTMeng-2023-TWC-UAV-IPSAC}.
    \item Posterior Cramér-Rao bound (PCRB) minimization (m-PCRB): At each time slot, the UAV trajectory is optimized to minimize the sum of predicted PCRBs for state variables of the next time slot, which can be expressed as \cite{fwd}
    \vspace{-1.5mm}
    \begin{equation}
        \min_{ \breve{\mathbf{q}}_{n} } \ \ \mathrm{Tr}(\mathbf{M}_{n}|_{\mathbf{x}_{n} = \breve{\mathbf{x}}_{n}}) \ \ 
        \text{s.t.} \ \  \text{(\ref{opt-cstrt-a})}. \notag 
        \vspace{-1.5mm}
    \end{equation}
    \item $\tilde{\sigma}_{1,n}^{2}$ minimization (m-$\tilde{\sigma}_{1,n}^{2}$): At each time slot, the UAV trajectory is optimized to minimize the approximated measurement noise variance for azimuth angle $\theta_{n}$ of the next time slot denoted by $\tilde{\sigma}_{1,n}^{2} = \sigma_{1,n}^{2}|_{\mathbf{x}_{n} = \breve{\mathbf{x}}_{n}}$, which can be expressed as
    \vspace{-1.5mm} 
    \begin{equation}
        \min_{ \breve{\mathbf{q}}_{n} } \ \ \tilde{\sigma}_{1,n}^{2} \ \ 
        \text{s.t.} \ \  \text{(\ref{opt-cstrt-a}), (\ref{opt-cstrt-b})}. \notag 
        \vspace{-1.5mm}
    \end{equation}
\end{itemize}
The sensing duration ratio $w_{n}$ is given by $w_{\text{max}} = 1$ to ensure the sensing performance as much as possible for all benchmarks. 

\subsubsection{UAV trajectories}
Fig. \ref{smgC:xy-p} and Fig. \ref{smgC:xy-m} illustrate the UAV trajectories obtained by the benchmarks and our proposed UAV trajectory optimization scheme in both the PMD and PMnD cases. 
To compare the dynamical UAV trajectories under different schemes during the whole $N\Delta T = 20$ s, the initial UAV motion state under all schemes are uniformly set as $\mathbf{x}_{0} = [ 20, 0, 20, 0 ]^{T}$.
The measurement capability coefficients are set as $a_{1} = a_{2} = 1$ and $a_{1} = a_{2} = 0.1$ for the PMD and PMnD case, respectively. 
To fairly compare our proposed scheme and benchmarks, the constraint (\ref{opt-cstrt-b}) with $y_{\text{min}} = 1$ is also applied in the m-$\tilde{\sigma}_{1,n}^{2}$ scheme and the specific position under the SFH scheme is given by $\mathbf{q}_{\text{F}} = [1,1]^{T}$. 
Other system parameters are given by $\tilde{q} = 10^{-5}$. 
First, it can be observed from both Fig. \ref{smgC:xy-p} and Fig. \ref{smgC:xy-m} that the UAV trajectory obtained by our proposed AO-based algorithm (dentoed by ``Prop.'') well match the results obtained by the exhaustive search (denoted by ``Ex.''), which validates the effectiveness of our proposed AO-based algorithm. 
Second, in both the PMD and PMnD cases, the UAV trajectory under the m-PCRB scheme is approximately circular to maintain an optimal distance minimizing the PCRB, 
while the UAV under the m-$\tilde{\sigma}_{1,n}^{2}$ scheme approaches the BS in the direction of $\theta_{n} = 90^{\circ}$ and then hovers around $[0,1]^{T}$, which is the optimal position for minimizing $\tilde{\sigma}_{1,n}^{2}$. 
Different from the UAV trajectories under benchmarks, the UAV under our proposed scheme tends to approach the BS with a relatively smaller azimuth angle and then stay at the straight line with $\breve{y}_{n} = y_{\text{min}}$ parallel to the BS ULA antennas in both the PMD and PMnD cases. 
The reason for such trajectory is that being at the line with $\breve{y}_{n} = y_{\text{min}}$ leads to wide COR/beam coverage, which is consistent with our previous observation from Fig. \ref{smgA:opq-32} and Fig. \ref{smgA:opq-64} and also demonstrates the importance of beam coverage to signal reception reliability. 
\emph{Consequently, the predicted UAV trajectory parallel to the BS ULA antennas with $\breve{y}_{n} = y_{\text{min}}$ is advantageous for outage capacity maximization.} 
In addition, the UAV trajectory with $\varepsilon_{\text{out}} = 10^{-4}$ is generally farther away from the BS than that with $\varepsilon_{\text{out}} = 10^{-2}$ in both the PMD and PMnD cases, indicating that a larger UAV-BS distance is more beneficial for enhancing the communication reliability.

\subsubsection{Outage capacities}
Fig. \ref{smgC:cn-p} and Fig. \ref{smgC:cn-m} compare the outage capacities achieved by the benchmarks and our proposed scheme in both cases.
Particularly, the outage capacities of benchmarks are calculated by our proposed algorithm for (P3.1) given their optimized predicted UAV trajectories and the OP threshold $\varepsilon_{\text{out}} = 10^{-2}$.
As illustrated in Fig. \ref{smgC:cn-p}, the communication performances under the SFH and m-$\tilde{\sigma}_{1,n}^{2}$ scheme exhibit large random variations similar as fast fadings in the PMD case. 
The reason is that the UAV is improperly near the BS and can be easily away from the COR/beam coverage due to the position uncertainty. 
The outage capacity under the m-PCRB scheme are relatively stable but limited by the high path loss. 
Comparatively, the outage capacity under our proposed scheme is much more stable than benchmarks and also higher than benchmarks for over 0.2 bps/Hz, which validates the effectiveness and superiorities of our proposed outage capacity maximization scheme over benchmarks in the PMD case. 
Nevertheless, Fig. \ref{smgC:cn-m} shows that such superiorities disappear in the PMnD case because the small measurement MSE leads to a low OP even if the UAV is close to the BS. 
Besides, the outage capacity under our proposed scheme in the case with $\varepsilon_{\text{out}} = 10^{-4}$ is lower than that with $\varepsilon_{\text{out}} = 10^{-2}$, which shows the trade-off between the communication reliability and efficiency. 

\subsubsection{Sensing accuracies}
Fig.\ref{smgC:s-m} compares the sensing accuracies under our proposed scheme among different $\varepsilon_{\rm{out}}$ in the PMnD case.
The sensing accuracy is characterized by the sum of root PCRBs for $x_{n}$ and $y_{n}$ given by $\sqrt{ [\mathbf{M}_{n}]_{11} } + \sqrt{ [\mathbf{M}_{n}]_{33} }$ \cite{fwd}.
It can be seen that the sensing accuracy gradually maintains stable with the increasing of time resulted from the little variance of UAV trajectory.
Besides, the sensing accuracy is generally higher with a larger $\varepsilon_{\rm{out}}$.
This is because the UAV tends to obtain a smaller azimuth angle w.r.t. the BS for a wider beam coverage, by which means the more stringent OP tolerance constraint can be satisfied.
However, such UAV trajectory can degrade the measurement MSE for the azimuth angle, which further results in the larger estimation MSE.
Therefore, the UAV trajectory also achieves a trade-off between the sensing accuracy and the tolerated minimum OP in our system.

\section{Conclusions}

This paper studied the outage capacity maximization for UAV tracking enabled by sensing-assisted predictive beamforming, where the UAV trajectory, sensing duration ratio, and target constant received SNRs were jointly optimzied. 
To facilitate the formulation of a tractable optimization problem, closed-form OP approximations were proposed based on second-order Taylor expansions, which also characterized the outage capacity. 
Then, two efficient algorithms were proposed to address the non-convex approximated optimization problem: a search-based algorithm with ensured convergence and an AO-based algorithm with lower complexity. 
Simulation results verified the effectiveness of our proposed approximations, algorithms, and the superiority of the proposed joint UAV tracking and outage capacity maximization scheme over benchmarks in the PMD case. 
Furthermore, our results demonstrated that the optimal predicted UAV trajectory tended to be parallel to the BS ULA antennas with a nonzero minimum distance, achieving a trade-off between decreasing path loss and increasing beam coverage area for outage capacity maximization.
The extension of our proposed approximations to multi-static ISAC systems are worthwhile future works. 

\section*{Appendix A \\ \textsc{Proof of Proposition 1}}

According to the EKF framework \cite{MKay}, the state vector at each time slot can be approximately Gaussian distributed, represented by $\mathbf{x}_{n-1} \sim \mathcal{N}(\hat{\mathbf{x}}_{n-1},\mathbf{M}_{n-1}), \forall n\in\{1,2,...,N\}$.  
Thus, $\mathbf{x}_{n} \sim \mathcal{N}(\breve{\mathbf{x}}_{n}, \mathbf{M}_{\text{p},n})$ is derived from (\ref{fm:pre}) and (\ref{fm:Mpn}).  
Furthermore, as a marginal distribution of the state vector $\mathbf{x}_{n}$, the ground-truth UAV trajectory $\mathbf{q}_{n}$ is also Gaussian distributed given by $\mathbf{q}_{n} \sim \mathcal{N}(\breve{\mathbf{q}}_{n},\bm{\breve{\Lambda}}_{n})$ with 
\vspace{-1.5mm}
\begin{equation}
    \bm{\breve{\Lambda}}_{n} = 
    \begin{bmatrix}
        \breve{\Lambda}_{\text{x},n}^{2} & \breve{\Lambda}_{\text{xy},n}^{2} \\
        \breve{\Lambda}_{\text{xy},n}^{2} & \breve{\Lambda}_{\text{y},n}^{2}
    \end{bmatrix} = 
    \begin{bmatrix}
        [\mathbf{M}_{\text{p},n}]_{11} & [\mathbf{M}_{\text{p},n}]_{13} \\
        [\mathbf{M}_{\text{p},n}]_{31} & [\mathbf{M}_{\text{p},n}]_{33}
    \end{bmatrix}. 
    \vspace{-1.5mm}
\end{equation}
Therefore, $\acute{\mathbf{q}}_{n} \sim \mathcal{N}(\mathbf{0},\bm{\breve{\Lambda}}_{n})$ holds. 
Note that $\tilde{\xi}_{\text{p},n}$ is a univariate function of $\acute{\mathbf{q}}_{n}$. 
Then, the approximated OP at the prediction stage of the $n$th time slot (\ref{fm:aCOP}) can be derived from (\ref{fm:OPexpress}) with $\mathcal{Q}_{\text{p},n} \approx \tilde{\mathcal{Q}}_{\text{p},n}$, i.e., 
\vspace{-1.5mm}
\begin{equation}
    \zeta_{\text{p},n} 
    \approx \tilde{\zeta}_{\text{p},n} = 1 - \int_{\acute{x}_{\text{L}}}^{\acute{x}_{\text{U}}} 
    \biggl( \int_{\acute{y}_{\text{L}}(\acute{x}_{n})}^{\acute{y}_{\text{U}}(\acute{x}_{n})}
    f(\acute{\mathbf{q}}_{n}) \mathrm{d}\acute{y}_{n} \biggr) \mathrm{d}\acute{x}_{n}, 
    \vspace{-1.5mm}
\end{equation}
where $f(\cdot)$ denotes the Gaussian PDF of $\acute{\mathbf{q}}_{n}$, $\acute{x}_{\text{L}}$ and $\acute{x}_{\text{U}}$ can be obtained from the equation $\mathrm{d}\acute{x}_{n} / \mathrm{d}\acute{y}_{n} = 0 $. 
This completes the proof.

\section*{Appendix B \\ \textsc{Proof of Proposition 2}}

For notational simplicity, we adopt $\chi$ to represent either the function $\breve{\chi}$ or $\hat{\chi}$. 
Accordingly, $\gamma, \bar{x}, \chi_{\text{U}}, \chi_{\text{L}}, \Lambda_{\text{x},n}, \mathbf{\Lambda}_{n}$ represents $\breve{\gamma}_{n}, \acute{x}_{n}, \breve{\chi}_{\text{U}}, \breve{\chi}_{\text{L}}, \breve{\Lambda}_{\text{x},n}, \breve{\mathbf{\Lambda}}_{n} $ in the case where $\chi$ denotes $\breve{\chi}$, and $\hat{\gamma}_{n}, \grave{x}_{n} \hat{\chi}_{\text{U}}, \hat{\chi}_{\text{L}}, \hat{\Lambda}_{\text{x},n}, \hat{\mathbf{\Lambda}}_{n}$ in the case where $\chi$ denotes $\hat{\chi}$, respectively. 
Then, an upperbound of the partial derivative of $\chi$ w.r.t. $\gamma$ can be derived as 
\vspace{-1.5mm}
\begin{equation}
    \frac{\partial\chi}{\partial\gamma} \leq A \biggl( \frac{\partial Y_{1}}{\partial \gamma} + (\bar{x} + \breve{x}_{n})^{2} \frac{\partial Y_{2}}{\partial \gamma} \biggr) = A \rho(\gamma), 
    \vspace{-1.5mm}
\end{equation}
with $A = \frac{\Lambda_{\text{x},n} \max\{ \frac{e^{-\chi_{\text{U}}(\gamma)^{2}}}{\sqrt{\pi}}, \frac{e^{-\chi_{\text{L}}(\gamma)^{2}}}{\sqrt{\pi}} \} }{\sqrt{2|\mathrm{det}(\mathbf{\Lambda}_{n})|(Y_{1}+Y_{2}(\bar{x} + \breve{x}_{n})^{2})}} $.
The derivative of $\rho(\gamma)$ w.r.t. $\gamma$ can be derived as $\frac{\mathrm{d}\rho}{\mathrm{d}\gamma} = \frac{(\breve{x}_{n}^{2} + \breve{y}_{n}^{2})^{6} \tilde{P}(\rho_{1} \gamma + \rho_{0})}{((\breve{x}_{n}^{2} + \breve{y}_{n}^{2})^{3}\gamma + \breve{x}_{n}^{2}\breve{y}_{n}^{2}\tilde{P}M)^{4}}$,
where the specific expressions of $\rho_{0}$ and $\rho_{1}$ are given by 
\begin{align}
    \rho_{0} &= 2 \breve{x}_{n}^{2} \breve{y}_{n}^{2} \tilde{P} M ( \breve{y}_{n}^{2}( ( \acute{x}_{n} + \breve{x}_{n})^{2}( \breve{x}_{n}^{2} + 2\breve{y}_{n}^{2} ) \notag \\
    &+ \breve{x}_{n}^{2} H^{2} )M + (\breve{x}_{n}^{2} + \breve{y}_{n}^{2})^{3}N_{\text{t}} ) \geq 0, \\
    \rho_{1} &= 2 \breve{y}_{n}^{2} (\breve{x}_{n}^{2} + \breve{y}_{n}^{2})^{3} ((\acute{x}_{n} + \breve{x}_{n})^{2}(\breve{x}_{n}^{2} - \breve{y}_{n}^{2}) + \breve{x}_{n}^{2} H^{2} ) M \notag \\
    &+ 2 (\breve{x}_{n}^{2} + \breve{y}_{n}^{2})^{6} N_{\text{t}}. 
\end{align}
Next, two cases with $\rho_{1} \geq 0 $ and $\rho_{1} < 0 $ are discussed, respectively. 
For the case with $\rho_{1} \geq 0 $, $\frac{\mathrm{d}\rho}{\mathrm{d}\gamma} \geq 0$ holds due to $ \gamma \geq 0 $.
As for the case with $\rho_{1} < 0 $, $\frac{\mathrm{d}\rho}{\mathrm{d}\gamma}$ is a monotonically nonincreasing function. 
Since both $\frac{\mathrm{d}\rho}{\mathrm{d}\gamma}|_{\gamma = 0} \geq 0$ and $\lim\limits_{\gamma \rightarrow \infty} \frac{\mathrm{d}\rho}{\mathrm{d}\gamma} \geq 0 $ can be obtained, $\frac{\mathrm{d}\rho}{\mathrm{d}\gamma} \geq 0$ also holds in this case. 
Thus, $\rho(\gamma)$ is a monotonically nondecreasing function of $\gamma$. 
Finally, both $\rho(0) \leq 0$ and $\lim\limits_{\gamma \rightarrow \infty} \rho(\gamma) = 0$ can be obtained. 
As a result, $\frac{\partial\chi}{\partial\gamma} \leq A \rho(\gamma) \leq 0$ holds, completing the proof. 

\bibliographystyle{IEEEtran}
\bibliography{IEEEabrv,ref}

\end{document}